\setlist[enumerate,1]{leftmargin=*,label={\textnormal{(\roman*)}},topsep=0.5ex}
\newcommand{\dpunct}[1]{\text{#1}}
\renewrobustcmd{\pname}[1]{{\normalfont\scshape #1}}
\newcommand{\dif}{\mathop{\mathrm{d}}\!}
\newcommand{\od}[2]{\frac{\dif #1}{\dif #2}}
\theoremstyle{plain}
\newtheorem{theorem}{Theorem}
\newtheorem{lemma}[theorem]{Lemma}
\newtheorem{corollary}[theorem]{Corollary}
\theoremstyle{definition}
\newtheorem{definition}[theorem]{Definition}
\newcommand{\ptm}[3][a]{\fcall[#1]{\operatorname{t}_{#2}}{#3}} 
\newcommand{\work}[3][a]{\fcall[#1]{\operatorname{w}_{#2}}{#3}} 
\newcommand{\procnum}[3][a]{\fcall[#1]{\gamma_{#2}}{#3}} 
\newcommand{\KP}[3][]{\fcall{\operatorname{KP}\ifstrequal{#1}{1}{^{(1)}}{}\ifstrequal{#1}{2}{^{(2)}}{}}{#2, #3}}
\newcommand{\KPOPT}[3][a]{\fcall[#1]{\operatorname{OPT_{KP}}}{#2, #3}}
\newcommand{\KPCOPT}[2][a]{\fcall[#1]{\operatorname{OPT}}{#2}}
\newcommand{\s}[2][1]{\fcall[#1]{\operatorname{s}}{#2}}  
\newcommand{\pr}[2][1]{\fcall[#1]{\operatorname{p}}{#2}} 
\let\chapter\section
\crefname{algocf}{Algorithm}{Algorithms}
\crefname{algocfline}{line}{lines}
\title{Scheduling Monotone Moldable Jobs in Linear Time\footnote{%
  Research was in part supported by German Research Foundation (DFG) project JA 612/16-1.
}}
\author{Klaus Jansen \and Felix Land}
\begin{document}
\maketitle

\begin{abstract}
  A moldable job is a job that can be executed on an arbitrary number of processors,
  and whose processing time depends on the number of processors allotted to it.
  A moldable job is monotone if its work doesn't decrease
  for an increasing number of allotted processors.
  We consider the problem of scheduling monotone moldable jobs to minimize the makespan.
  
  We argue that for certain compact input encodings
  a polynomial algorithm has a running time polynomial in $n$ and $\log m$,
  where n is the number of jobs and m is the number of machines.
  We describe how monotony of jobs can be used to counteract the
  increased problem complexity that arises from compact encodings,
  and give tight bounds on the approximability of the problem with compact encoding:
  it is \cclass{NP}-hard to solve optimally, but admits a PTAS.
  
  The main focus of this work are efficient approximation algorithms.
  We describe different techniques to exploit the monotony of the jobs for better running times,
  and present a $(\frac{3}{2}+\epsilon)$-approximate algorithm
  whose running time is polynomial in $\log m$ and $\frac{1}{\epsilon}$,
  and only linear in the number~$n$ of jobs.
\end{abstract}

\section{Introduction}

In classical scheduling models, the input consists of a description of the available processors,
and a set of jobs with associated processing times.
Each processor can process one job at any point in time.
Additional constraints may be part of the model.
One way to model complex, parallelizable tasks are moldable\footnote{Some authors use the term malleable.} jobs,
which have a variable parallelizability~\cite{du89}.
Formally, we are given a set~$J$ consisting of $n$~jobs and a number~$m$ of processors.
The processing time $\ptm{j}{1}$ on one processor is given for each job~$j$,
as well as the speedup~$\fcall{\mathrm{s}_j}{k}$ that is achieved when executing it on $k > 1$ processors.
The processing time on $k$ processors then is given as~$\ptm{j}{k} = \frac{\ptm{j}{1}}{\fcall{\mathrm{s}_j}{k}}$.
The goal is to produce a schedule that assigns for each job a starting time and a number of allotted processors
such that the \emph{makespan}, i.e.~the completion time of the last job, is minimized.

Without restriction, we assume that the speedup is non-decreasing,
or equivalently, the processing time is non-increasing in the number of processors.
A job is called monotone if its work function $\work{j}{k} = k \times \ptm{j}{k}$ is non-decreasing.
This is a reasonable assumption, since an increased number of processors requires more communication.
Monotony helps when designing algorithms~\cite{belkhale90,mounie99,mounie07}.
Sometimes even stronger\footnote{For a proof that concave speedup functions imply monotony, see~\cite{jansen12}.}
assumptions are made, e.g.~that the speedup functions are concave~\cite{blazewicz06,sanders11,jansen12}.

Since the problems considered here are \cclass{NP}-hard,
we will discuss approximation algorithms.
An algorithm for a minimization problem is $c$-approximate
if it produces a solution of value at most $c \OPT*{I}$ for each instance~$I$.
The number~$c \geq 1$ is called its approximation guarantee.

We pay extra attention to the encoding length~$\clen{I}$ in dependence of the number~$m$ of processors.
The running time of most algorithms is polynomial in~$m$~\cite{belkhale90,turek92,jansen03,jansen06a,mounie07,jansen10}.
Many authors expect that the values $\ptm{j}{k}$, $k \in \setrange{1}{m}$ are explicitly given as a list,
such that $m = \Landau{<=}{\clen{I}}$.
Under this assumption, these algorithms' running time is polynomial in the input size.
On the other hand, more compact encodings are conceivable in many cases.
Sometimes it is assumed that the processing time function is of a special form,
e.g.~linear~\cite{grigoriev06} or a power function~\cite{makarychev14},
which can be described with a constant number of coefficients.
Since the number of processors is encoded in $\log m$ bits,
the aforementioned algorithms can have a running time that is exponential in the input length
when compact enconding is used.

It is our main goal to develop fully polynomial algorithms for instances with compact input encoding,
i.e.~algorithms with a running time polynomial in~$\log m$.
Such algorithms will outperform algorithms whose running time is polynomial in $m$
for large values of $m$ (super-polynomial in the input size).
Only few known algorithms are fully polynomial in this sense~\cite{mounie99,sanders11,jansen13a}.
Since we do not want to stipulate a certain form of speedup functions,
we assume that the running times~$\ptm{j}{k}$ can be accessed via some oracle in constant time.

\paragraph*{Previous Results}

It is known that finding an exact solution without monotony is \cclass{NP}-hard~\cite{du89}.
If the jobs are monotone,
it is only known that finding an exact solution is weakly \cclass{NP}-hard~\cite{jansen13c}.
Both results even hold for a constant number of processors.
As a consequence, they also hold with compact input encoding.
The problem complexity actually depends on the used input encoding.
It is known that there is no polynomial time algorithm for scheduling of parallel jobs
with approximation guarantee less than $\frac{3}{2}$,
unless $\cclass{P} = \cclass{NP}$.
This can be deduced from a reduction from the partition problem to scheduling of parallel jobs~\cite{drozdowski95}.
By setting
$\ptm{j}{k} = t_j$ if $k \geq \mathrm{size}(j)$ and $\ptm{j}{k} = \infty$ otherwise,
we can extend the result to scheduling of moldable jobs,
although the resulting work functions are not monotone.
However, this reduction is polynomial only if we use a compact encoding for the resulting instance.
Furthermore, if we allow algorithms to be polynomial in~$m$,
the produced instances can be optimally solved:
since the reduction is one-to-one, we can go back to the original partition instance,
solve it via dynamic programming in time~$\Landau{<=}{nm}$,
and convert the solution back to the scheduling setting.
Indeed, without compact encoding, the problem admits a PTAS~\cite{jansen10}.

Considering approximate algorithms,
Belkhale and Banerjee~\cite{belkhale90} found a $2$-approximate algorithm for scheduling monotone moldable tasks.
This approximation guarantee was later matched without monotony by an algorithm due to Turek, Wolf, and Yu~\cite{turek92}.
The running time was later improved by Ludwig and Tiwari~\cite{ludwig94}.
Their algorithm for the case of monotone jobs was the first to achieve a running time polynomial in $\log m$, namely $\Landau{<=}{n \log^2 m}$.
Mounié, Rapine, and Trystram improved the approximation guarantee with monotony to~$\sqrt{3} + \epsilon \approx 1.73$,
with arbitrarily small~$\epsilon > 0$,
also with polylogarithmic dependence on~$m$.
They later presented a $(\frac{3}{2} + \epsilon)$-approximate algorithm with running time~$\Landau{<=}{nm \log \frac{1}{\epsilon}}$~\cite{mounie04,mounie07}.
A PTAS with running time polynomial in~$m$ was subsequently developed that does not require monotony~\cite{jansen10}.
Finally, a $(\frac{3}{2} + \epsilon)$-approximate algorithm with polylogarithmic dependence on~$m$ that also does not assume monotone jobs was developed by Jansen~\cite{jansen13a}.

\paragraph*{Our Contribution}

We improve the understanding of scheduling monotone moldable jobs in several ways.
In \cref{sec:hardness} we resolve the complexity of the considered problem.
\begin{theorem}
  \label{thm:np-complete}
  It is \cclass{NP}-complete to decide whether a set of monotone jobs can be scheduled with a given makespan.
\end{theorem}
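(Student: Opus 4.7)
The plan is to establish NP-hardness by a compact reduction from the strongly NP-hard \pname{3-Partition} problem, which in particular strengthens the weak NP-hardness of~\cite{jansen13c}. Membership in NP is routine: a schedule is described by a starting time and a processor count per job, which together have polynomial encoding length, and its validity and makespan can be verified in polynomial time using constantly many oracle calls to $\ptm{j}{k}$ per job.

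For the hardness direction, let $(a_1, \ldots, a_{3n})$ with $\sum_i a_i = nB$ and $B/4 < a_i < B/2$ be a \pname{3-Partition} instance. I construct a scheduling instance on $m = n$ processors with $3n$ jobs whose processing time functions are constant, $\ptm{j_i}{k} = a_i$ for every $k \in \{1, \ldots, m\}$, and with target makespan $B$. Each resulting work function $\work{j_i}{k} = k a_i$ is non-decreasing, so every job is monotone; moreover the per-job description is of constant size, so the reduction is genuinely compact.

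Correctness then reduces to a tight work-counting argument. If the \pname{3-Partition} instance admits a partition into triples $T_1, \ldots, T_n$, stacking the three jobs of $T_\ell$ back-to-back on processor~$\ell$ yields a schedule of makespan exactly~$B$. For the converse, a schedule of makespan at most~$B$ consumes at most $mB = nB$ processor-time; if some job $j_i$ were run on $k_i \geq 2$ processors its contribution $k_i a_i$ would strictly exceed $a_i$, and the total would overshoot the bare sum $\sum_i a_i = nB$. Hence every job is forced onto a single processor, each of the $n$ machines has total load exactly~$B$, and the size restriction $B/4 < a_i < B/2$ forces each machine to host precisely three jobs, yielding the desired \pname{3-Partition}.

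The main (rather mild) technical point is arguing that the work budget is \emph{exactly} tight, so that no wiggle room remains for non-unit processor allotments. This is precisely what the constancy of $\ptm{j_i}{\cdot}$ guarantees: any attempt to use parallelism strictly increases work without reducing processing time, and the global budget leaves no slack to absorb the excess.
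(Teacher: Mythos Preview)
Your proof is correct. Your reduction differs from the paper's in two ways, and both of your choices are in fact simpler.

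First, you reduce from \pname{3-Partition} rather than \pname{4-Partition}; this is inconsequential. Second, and more interestingly, you use constant processing time functions $\ptm{j_i}{k}=a_i$, whereas the paper uses strictly decreasing functions $\ptm{j_i}{k}=ma_i-k+1$ and verifies by a short calculation that the resulting work functions are strictly increasing. Your choice makes the work argument immediate: any allotment $k_i\ge 2$ strictly increases total work past the tight budget $nB$, so every job is forced onto one processor, and the rest is the standard \pname{3-Partition} counting. In effect you are observing that classical $P\,\Vert\,C_{\max}$ embeds into monotone moldable scheduling via constant speedup, which already gives strong \cclass{NP}-hardness with a compact encoding. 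The paper's construction buys a slightly stronger (unstated) conclusion: hardness holds even for jobs whose processing time \emph{strictly} decreases with the number of processors, i.e.\ jobs that genuinely benefit from parallelism. If one cares about ruling out the degenerate ``moldability is irrelevant'' corner, the paper's version is preferable; for the theorem as stated, your argument is cleaner.

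One small remark on your \cclass{NP} membership sketch: you should say a word about why the starting times in the certificate have polynomial bit-length. The paper sidesteps this by guessing only the processor allotment and the start \emph{order} and then running list scheduling deterministically, which avoids the issue altogether; alternatively, you can note that an optimal schedule can always be normalised so that every start time is a sum of at most $n$ processing times.
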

We proceed to describe an extremely efficient FPTAS for the case that the number of machines is large enough in \cref{sec:fptas}.
\begin{theorem}
  \label{thm:fptas}
  There is an FPTAS  for the case that $m \geq 8\frac{n}{\epsilon}$
  with a running time of $\Landau{<=}{n \log^2 m (\log m + \log \frac{1}{\epsilon})}$.
\end{theorem}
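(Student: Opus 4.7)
I plan to build the FPTAS through a dual-approximation scheme: binary search on a target makespan $T$, combined with a monotony-based feasibility test. First, fix computable bounds $L \leq \KPCOPT{I} \leq U$ with $U / L = \Landau{<=}{m}$; for example, $L = \max\bigl(\max_j \ptm{j}{m},\, \tfrac{1}{m}\sum_j \ptm{j}{1}\bigr)$ and $U = n \cdot \max_j \ptm{j}{1}$. Binary search for the smallest $T \in [L, U]$ passing the feasibility test below at precision $\epsilon L / 2$; this amounts to $\Landau{<=}{\log(U/L) + \log \tfrac{1}{\epsilon}} = \Landau{<=}{\log m + \log \tfrac{1}{\epsilon}}$ outer iterations.

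The feasibility test at $T$ computes, for each job $j$, its minimum allotment $\gamma_j(T) = \min\{k \in \setrange{1}{m} : \ptm{j}{k} \leq T\}$ by binary search on $k$ using the running-time oracle, at cost $\Landau{<=}{\log m}$ per job. Since work is non-decreasing, any schedule of makespan at most $T$ must allot at least $\gamma_j(T)$ processors to each job $j$ and incur work at least $\gamma_j(T) \cdot \ptm{j}{\gamma_j(T)}$. Hence the pair of conditions $\gamma_j(T) \leq m$ for every $j$ and $\sum_j \gamma_j(T) \cdot \ptm{j}{\gamma_j(T)} \leq mT$ is necessary, and their failure certifies $T < \KPCOPT{I}$.

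When these conditions hold, I need to construct an actual schedule of makespan at most $(1+\epsilon)T$, and this is where the hypothesis $m \geq 8n/\epsilon$ becomes essential. The easy sub-case is $\sum_j \gamma_j(T) \leq m$, in which all jobs run in parallel from time $0$ and the makespan is at most $T$. Otherwise, I would classify jobs at a height threshold on $\ptm{j}{\gamma_j(T)}$, flatten the short jobs by re-allotting more processors so their heights drop to at most $\epsilon T$, and pack the few remaining tall jobs by an appropriate shelf strategy; monotony controls how much extra work each re-allotment introduces. The $m \geq 8n/\epsilon$ slack is what makes room for this: even when every job is widened to bring its height below $\epsilon T$, the total allotment stays within $m$, and the shelf overhead from the at most $n$ tall jobs is absorbed into the $\epsilon T$ budget.

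The main obstacle I anticipate is exactly this constructive step in the hard sub-case: producing a schedule whose overhead above $T$ is controlled to $\epsilon T$ rather than a constant multiple. A clean execution likely demands one further binary search --- over the short/tall threshold or over a logarithmic grouping of allotments --- adding a $\Landau{<=}{\log m}$ factor on top of the per-job $\Landau{<=}{\log m}$ for computing $\gamma_j(T)$. This yields per-iteration cost $\Landau{<=}{n \log^2 m}$ which, together with the $\Landau{<=}{\log m + \log \tfrac{1}{\epsilon}}$ outer iterations, reproduces the claimed running time $\Landau{<=}{n \log^2 m (\log m + \log \tfrac{1}{\epsilon})}$.
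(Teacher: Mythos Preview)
Your overall framework---dual approximation via binary search on a target~$T$, with a feasibility test built on the minimal allotments~$\gamma_j(T)$---is the same as the paper's. But the constructive step, which you correctly flag as the obstacle, is handled very differently in the paper, and your proposed strategy for it does not work as stated.

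The paper's dual algorithm is far simpler than what you anticipate: it allots~$\gamma_j((1+\epsilon)d)$ processors to each job and schedules all jobs in parallel starting at time~$0$. That is the whole algorithm. There is no height threshold, no shelf packing, no inner binary search. The entire content lies in showing that when $d \geq \mathrm{OPT}$ and $m \geq 8n/\epsilon$, these allotments sum to at most~$m$. The ingredient you are missing is the processor-count bound $\sum_j \gamma_j(d) < m + n$: for each job with $\gamma_j(d) > 1$ one has $t_j(\gamma_j(d)-1) > d$ by minimality of~$\gamma_j$, hence $(\gamma_j(d)-1)\,d < w_j(\gamma_j(d)-1) \leq w_j(a_j^*)$ by monotony, and summing against the optimal allotment~$a_j^*$ yields the bound. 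Once the overflow beyond~$m$ is known to be at most~$n$, one \emph{compresses} the wide jobs (those with $\gamma_j(d) \geq 4/\epsilon$): monotony guarantees that cutting their processor count by a factor $1-\epsilon/4$ raises their processing time by at most a factor $1+\epsilon$. Since $m \geq 8n/\epsilon$ forces the narrow jobs to occupy at most $m/2$ processors, the wide jobs occupy more than $m/2$, and compression frees more than $(\epsilon/4)(m/2) \geq n$ processors---enough to absorb the overflow. Finally $\gamma_j((1+\epsilon)d)$ is by definition no larger than the compressed count, closing the argument.

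Your flattening idea goes in the opposite direction (widening jobs to lower their heights) and hits a genuine obstruction: a job with $t_j(m) > \epsilon T$ cannot be flattened below~$\epsilon T$ at all, and even when flattening is possible you have no control on $\sum_j \gamma_j(\epsilon T)$ from the work bound $\sum_j w_j(\gamma_j(T)) \leq mT$, since monotony only says work \emph{increases} under widening. The paper's compression move is exactly the reverse: it narrows wide jobs and uses monotony to bound the resulting height increase. This is also why the per-iteration cost is only $O(n\log m)$ with no inner search; the $n\log^2 m$ in the stated bound comes from the Ludwig--Tiwari estimation used to seed the outer binary search, not from the dual test. (A minor side point: your bound $U/L = O(m)$ is not quite right---one only gets $U/L \leq nm$---but since $n \leq \epsilon m/8$ here, $\log(U/L) = O(\log m)$ still holds.)
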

In combination with the PTAS by Jansen and Thöle~\cite{jansen10},
this yields a PTAS for scheduling of monotone moldable jobs
with compact encoding of running times.
The algorithm by Jansen~\cite{jansen13a} achieves the same approximation guarantee
in the more general case without monotony,
but has a significantly worse running time.
In particular, our new algorithm's running times are polynomial in $\frac{1}{\epsilon}$,
while Jansen's algorithm is doubly exponential in $\frac{1}{\epsilon}$.

In \cref{sec:approximation} we first describe the $(\frac{3}{2} + \epsilon)$-approximate algorithm
due to Mounié, Rapine, and Trystram~\cite{mounie07},
and improve its running time to fully polynomial.
We further present techniques to gradually reduce the dependence of the running time on the number~$n$ of jobs.
\begin{theorem}
  \label{thm:dual-approx}
  For each $T$ given in Table~\ref{tbl:running-times}, there is a $(\frac{3}{2} + \epsilon)$-approximate algorithm
  with running time~$\Landau{<=}{n \log^2 m + \log \frac{1}{\epsilon} T(n, m, \epsilon)}$.
  \begin{table}
    \caption{Running times of our $(\frac{3}{2} + \epsilon)$-dual algorithms.}
    \label{tbl:running-times}
    \centering
    \begin{tabular}{l l}
      \toprule
        Algorithm & $T(n, m, \epsilon)$ \\
      \midrule
        Section~\ref{sec:reducing-kp-problems} &
          $\Landau{<=}{n (\log m + n \log \epsilon m)}$ \\
        Section~\ref{sec:bounded-kp} &
          $\Landau[2]{<=}{n \parens[2]{\frac{1}{\epsilon^2} \log m \parens[2]{\frac{\log m}{\epsilon} + \log^3 (\epsilon m)} + \log n}}$ \\
        Section~\ref{sec:linear} &
          $\Landau[2]{<=}{n \frac{1}{\epsilon^2} \log m \parens[2]{\frac{\log m}{\epsilon} + \log^3 (\epsilon m)}}$ \\
      \bottomrule
    \end{tabular}
  \end{table}
\end{theorem}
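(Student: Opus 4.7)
The plan is to combine the $(3/2+\epsilon)$-dual approximation algorithms constructed in the three subsections of \cref{sec:approximation} with a standard binary-search framework over candidate makespans.

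First I would run the $2$-approximation of Ludwig and Tiwari in time $O(n \log^2 m)$ to obtain a value $L$ satisfying $L/2 \leq \mathrm{OPT} \leq L$. This accounts for the additive $O(n \log^2 m)$ term in the claimed running time and, more importantly, confines the search for $\mathrm{OPT}$ to an interval of constant multiplicative width. Within that interval I would binary search for the smallest target makespan~$d$ that the feasibility procedure accepts; each halving step corresponds to one feasibility query, and after $O(\log(1/\epsilon))$ steps the interval containing $\mathrm{OPT}$ has absolute width at most $\epsilon L/2 \leq \epsilon\,\mathrm{OPT}$.

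Each binary-search step invokes a single call to one of the three dual-approximation procedures from \cref{sec:reducing-kp-problems,sec:bounded-kp,sec:linear}, which on a candidate makespan~$d$ either certifies that no schedule of makespan~$d$ exists or produces a schedule of makespan at most $(3/2+\epsilon)d$. The cost of a single call is exactly $T(n,m,\epsilon)$ as listed in \cref{tbl:running-times}. Composing these two facts, the final accepted target $d^\star$ lies at most an $\epsilon\,\mathrm{OPT}$ additive slack above $\mathrm{OPT}$, so the returned schedule has makespan at most $(3/2+\epsilon)(1+\epsilon)\,\mathrm{OPT}$, which absorbs into $(3/2+O(\epsilon))\,\mathrm{OPT}$ after an appropriate rescaling of~$\epsilon$. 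Summing the $O(\log(1/\epsilon))$ feasibility-query costs with the preprocessing cost yields exactly the bound $O\bigl(n \log^2 m + \log(1/\epsilon)\,T(n,m,\epsilon)\bigr)$ stated in the theorem.

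The main obstacle is not this outer reduction, which is essentially bookkeeping, but the construction of the three feasibility procedures themselves. All three follow the Mounié--Rapine--Trystram template --- classifying jobs as large or small relative to~$d$, solving a knapsack-type allotment problem for the large jobs, and inserting the small jobs greedily using an area bound --- and the three rows of the table reflect successively sharper implementations of the knapsack step: reducing the number of subproblems, bounding their individual sizes, and finally amortizing the per-job cost to drive the $n$-dependence down to purely linear. Eliminating the additive $\log n$ surviving in the second row is the most delicate step and is where the bulk of the technical work of \cref{sec:linear} must be invested; once those three procedures are in hand, the theorem follows from the binary-search framework described above.
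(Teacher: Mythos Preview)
Your proposal is correct and matches the paper's approach: the paper establishes \cref{thm:dual-approx} exactly by invoking the standard dual-approximation framework described at the start of \cref{sec:fptas} (Ludwig--Tiwari estimation in time $O(n\log^2 m)$ followed by $O(\log\frac{1}{\epsilon})$ binary-search calls to a $\frac{3}{2}$-dual procedure), with the three rows of \cref{tbl:running-times} coming from the successively refined dual algorithms built in \cref{sec:reducing-kp-problems,sec:bounded-kp,sec:linear}. Your summary of those refinements --- Mounié--Rapine--Trystram two-shelf template, sharper knapsack implementations, and the final elimination of the $\log n$ term via rounded processing times --- also tracks the paper accurately.
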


We make repeated use of a technique we call compression.
It reduces the number of processors used by a job in exchange for a bounded increase in the running time.
Compression allows us to approximate processor numbers for jobs that are allotted to a large number of processors.
This enables the use of various rounding techniques.
The intermediate solution then may use more than~$m$ processors,
before the jobs are finally compressed such that they require at most~$m$ processors.
This is similar to models with resource augmentation (see e.g.~\cite{chekuri04}),
except that we can use additional processors only for jobs that are allotted to a large number of processors.

\section{\cclass{NP}-Completeness of Monotone Moldable Job Scheduling}
\label{sec:hardness}

In this section we discuss Theorem~\ref{thm:np-complete}.
To be precise, we consider the problem of deciding
whether a given instance of scheduling monotone moldable jobs can be scheduled with makespan at most~$d$,
where $d$ is also part of the input.

\begin{proof}[Proof of Theorem~\ref{thm:np-complete}]
We first argue that the problem is in \cclass{NP}
by giving a nondeterministic polynomial procedure for solving the problem:
first, guess the number of processors allotted to each job in an optimal schedule.
Since these numbers are at most~$m$, we can guess them one bit at a time in $n \log m$ steps.
Afterwards, we guess the order in which the jobs start.
This is a list of $n$ numbers in $\setrange{1}{n}$,
which can be encoded in $n \log n$ bits.
Again, we can guess this encoding in $n \log n$ steps.
We now use list scheduling to schedule the jobs in this order
while respecting the previously guessed processor counts.
This procedure is clearly possible in polynomial time and uses $n (\log m + \log n)$ (binary) guessing steps.

\begin{figure}[b]
  \centering
  \begin{tikzpicture}[x=0.75cm,y=0.4cm]
    \tikzset{container/.style={fill=gray!10}}

    \draw[|->,>=latex] (-1.5,0) -- (-1.5,6.5) node[midway,sloped,above] {\small Time};

	  \pnode[name=S,draw=none,fill=none]{width=8,height=6}{}
	  \drawschedule{S}
 	  \draw[|-|] (S.south west) -- (S.north west) node[pos=0,left] {\footnotesize $0$} node[pos=1,left] {\footnotesize $nB$};
    \draw[<->,>=latex] (0,-1) -- node[below] {$m$} (8,-1);
 	  
 	  \newcommand{\joblist}[5]{%
  	  \pnode[name=#11,container]{align left=#2,align bottom=(S.south),width=1,height=#3}{}
	    \pnode[name=#12,container]{above of=#11,height=#4}{}
  	  \pnode[name=#13,container]{above of=#12,height=#5}{}
  	  \pnode[name=#14,container]{above of=#13,align top=(S.north)}{}
  	}
  	
  	\joblist{A}{(S.west)}{2.0}{1.3}{1.7}
  	\joblist{B}{(A1.east)}{1.4}{1.5}{1.8}
  	\joblist{C}{(B1.east)}{1.7}{1.6}{1.2}
  	\coordinate (E1) at (5,0);
  	\node at (4,3) {$\cdots$};
  	\joblist{F}{(E1.east)}{1.3}{1.2}{1.7}
  	\joblist{G}{(F1.east)}{1.9}{1.5}{1.3}
  	\joblist{H}{(G1.east)}{1.6}{1.3}{1.6}
	\end{tikzpicture}
  \caption{Structure of a schedule with makespan~$nB$}
  \label{fig:optimal-schedule}
\end{figure}

We give a reduction from \pname{$4$-Partition} to prove that our scheduling problem is strongly \cclass{NP}-hard.
Recall that an instance of \pname{$4$-Partition} contains a set~$A = \setirange{a}{1}{4n}$ of natural numbers and a number~$B$,
and remains \cclass{NP}-hard even when all numbers are strictly between $\frac{B}{5}$ and~$\frac{B}{3}$~\cite{garey79}.
We construct an instance of the scheduling problem as follows.
First, we assume that $\sum_{i = 1}^{4n} a_i = nB$,
otherwise we output a trivial no-instance.
Next, we scale the numbers such that~$a_i \geq 2$ for each~$i \in \setrange{1}{4n}$.
The number of machines will be~$m = n$.
Now we create a job~$j_i$ for each number~$a_i$, which has processing time~$\ptm{j_i}{k} = m a_i - k + 1$.
Note that these functions are monotonically decreasing.
Also~$m a_i \geq 2m > 2k$ for each~$k < m$ and therefore the work satisfies
\begin{equation}
  \label{eq:reduction-monotone}
  \begin{aligned}
    \work{j_i}{k+1} &= (k+1) \times \ptm{j_i}{k+1} \\
    &= (k+1)(m a_i + 1) - (k+1)^2 \\
    &= k (m a_i + 1) - k^2 + (m a_i + 1) - 2k - 1 \\
    &= \work{j_i}{k} + m a_i - 2k
    > \work{j_i}{k} \dpunct{,}
  \end{aligned}
\end{equation}
i.e~the jobs are strictly monotone.
The target makespan is~$d = nB$.
It remains to show that this is indeed a reduction,
i.e.~that a schedule with makespan~$nB$ exists if and only if the instance of \pname{$4$-Partition} is a yes-instance.

First assume that there is a schedule with makespan~$d$.
The total work of all jobs is at least~$\sum_{i = 1}^{4n} \work{j_i}{1} = \sum_{i = 1}^{4n} m a_i = m d$.
Due to the strict monotony, our schedule must allot exactly one processor to each job,
and all machines have load~$d$,
see \cref{fig:optimal-schedule} for an example of such a schedule.
The numbers corresponding to the jobs on one machine sum up to~$B$,
and because they are strictly between $\frac{B}{5}$ and~$\frac{B}{3}$,
there are exactly four such numbers.
Therefore, there is a solution to the instance of \pname{$4$-Partition}.

On the other hand, if the instance of \pname{$4$-Partition} is a yes-instance,
a schedule as depicted in \cref{fig:optimal-schedule} is easily constructed from a solution.
\end{proof}

This shows that scheduling monotone jobs is \cclass{NP}-hard in the strong sense.
Furthermore, using a complexity result for \pname{$4$-Partition}~\cite{jansen13b},
there is no algorithm that solves this problem exactly in time~$2^{\Landau{<}{n}} \times \clen{I}^{\Landau{<=}{1}}$,
unless the Exponential Time Hypothesis fails.

\section{An FPTAS for Large Machine Counts}
\label{sec:fptas}

In this section, we present a fully polynomial approximation scheme (FPTAS)
for the case that $m \geq 8\frac{n}{\epsilon}$,
as stated in Theorem~\ref{thm:fptas}.
An FPTAS finds a $(1 + \epsilon)$-approximate solution
in time polynomial in the input length and $\frac{1}{\epsilon}$
for each $\epsilon \in (0, 1]$.
The case where $m$ is much larger than $n$ is the most interesting case
in our setting with compact input encoding,
because otherwise $m$ is polynomial in the input.
Furthermore, this allows us to focus on the case $m < 8\frac{n}{\epsilon}$
in the following chapters.

The algorithm itself is a \emph{dual approximate algorithm}~\cite{hochbaum87}.
A $c$-dual approximate algorithm accepts a number $d$ in addition to the instance as input.
It will output a solution with makespan at most~$cd$,
provided that a solution with makespan~$d$ exists.
Otherwise, it may reject the instance.
It is well known that a $c$-approximate dual algorithm with running time~$T(n, m)$
can be turned into a $(c + \epsilon)$-approximate algorithm
with running time~$\Landau{<=}{T'(n, m) + \log \frac{1}{\epsilon} \times T(n, m)}$,
where $T'(n, m)$ is the running time of an \emph{estimation} algorithm with arbitrary but constant estimation ratio.

An estimation algorithm with estimation ratio~$\rho$ computes a value~$\omega$
that estimates the minimum makespan within a factor of~$\rho$,
i.e.~$\omega \leq \OPT \leq \rho \omega$.
Here, we use an algorithm due to Ludwig and Tiwari~\cite{ludwig94}
with running time~$T'(n, m) = \Landau{<=}{n \log^2 m}$.
Although they do not explicitly state this,
their algorithm can be trivially turned into one with estimation ratio~$2$:

Their algorithm computes an allotment $a$
which allots to each job~$j \in J$ a number~$a_j$ of processors,
and this allotment minimizes the value
\begin{equation}
  \omega = \min \parens[4]{\frac{1}{m} \sum_{j \in J} \work{j}{a_j},
    \max_{j \in J} \ptm{j}{a_j}}
\end{equation}
among all allotments.
Therefore $\omega \leq \OPT$.
On the other hand, the list scheduling algorithm,
applied to the instance with the fixed allotment~$a$,
produces a schedule of makespan at most~$2 \omega$~\cite{garey75},
so $\OPT \leq 2 \omega$.

For our algorithm, we specify $c = 1 + \epsilon$,
resulting in a $1 + 2 \epsilon$ approximation ratio.
Our algorithms will frequently schedule jobs using the least number of processors
such that its processing time is below a threshold $t$.
Therefore let~$\procnum{j}{t} = \min \setst{p \in \natupto{m}}{\ptm{j}{p} \leq t}$,
see also the work of Mounié, Rapine, and Trystram~\cite{mounie07}.
Note that $\procnum{j}{t}$ can be found in time~$\Landau{<=}{\log m}$ by binary search.

The algorithm is extremely simple:
allot $\procnum{j}{(1+\epsilon) d}$ processors to each job~$j$
and schedule them simultaneously.
If this schedule requires more than $m$ machines, reject.

Clearly, the running time for the dual approximate algorithm is $\Landau{<=}{n \log m}$.
The final algorithm therefore requires
$\Landau{<=}{n \log m (\log m + \log \frac{1}{\epsilon})}$ time.

\subsection{Analysis}

It remains to show that the algorithm is indeed $(1+\epsilon)$-dual approximate.
The produced schedule clearly has makespan at most $(1+\epsilon) d$
by the definition of $\procnum{j}{(1+\epsilon) d}$.
To prove that the algorithm only rejects if there is no schedule with makespan~$d$,
we will argue that the produced schedule requires at most~$m$ processors,
i.e.~$\sum_{j \in J} \procnum{j}{(1+\epsilon) d} \leq m$,
provided that $d \geq d^*$,
where $d^*$~is the optimal makespan.

To this end, we consider the same algorithm with a more complex allotment rule,
which uses \emph{compression} of jobs,
our main technique for exploiting the monotony of the work functions.
Compression reduces the number of processors allotted to a job
in exchange for a bounded increase of its processing time.

\begin{lemma}
  \label{lemma:compression}
  If $j$ is a job that uses $b \geq \frac{1}{\rho}$~machines in some schedule,
  where~$\rho \in (0, 1/4]$,
  then we can free $\ceil{b\rho}$~machines and the schedule length increases
  by at most~$4\rho \ptm{j}{b}$.
\end{lemma}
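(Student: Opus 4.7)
The plan is to show that the new processing time is at most $(1+4\rho)\ptm{j}{b}$, so that the increase is at most $4\rho\ptm{j}{b}$, which clearly bounds the increase in the overall makespan (changing the allotment of a single job only affects its own processing time).

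First I would set $b' = b - \ceil{b\rho}$, which is the new number of machines allotted to $j$. Using monotony of the work function $\work{j}{k} = k \times \ptm{j}{k}$, since $b' \leq b$, we have $b' \times \ptm{j}{b'} = \work{j}{b'} \leq \work{j}{b} = b \times \ptm{j}{b}$, and hence $\ptm{j}{b'} \leq \frac{b}{b'}\ptm{j}{b}$. So the whole task reduces to proving $\frac{b}{b'} \leq 1 + 4\rho$.

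Next, I would lower-bound $b'$. The hypothesis $b \geq 1/\rho$ gives $b\rho \geq 1$, so $\ceil{b\rho} \leq b\rho + 1 \leq 2 b\rho$. Therefore $b' \geq b - 2b\rho = b(1-2\rho)$, and we obtain $\frac{b}{b'} \leq \frac{1}{1-2\rho} = 1 + \frac{2\rho}{1-2\rho}$. Finally I would use the assumption $\rho \leq 1/4$, which implies $1-2\rho \geq 1/2$, so $\frac{2\rho}{1-2\rho} \leq 4\rho$ and the claim follows.

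I do not anticipate a real obstacle here: the argument is essentially two lines of work-monotonicity plus a careful bound on the ceiling. The only slightly delicate point is the step $\ceil{b\rho} \leq 2b\rho$, which is exactly where the hypothesis $b \geq 1/\rho$ is needed (otherwise $\ceil{b\rho} = 1$ could dominate $b\rho$), and the choice of the constant $1/4$ for $\rho$, which is exactly what makes $\frac{1}{1-2\rho} \leq 1 + 4\rho$ work out cleanly.
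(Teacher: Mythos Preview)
Your proof is correct and follows essentially the same approach as the paper: use work-monotony to reduce the claim to bounding $\tfrac{b}{b'}$, then use $b\rho \geq 1$ to show the new machine count is at least $b(1-2\rho)$, and finally use $\rho \leq \tfrac{1}{4}$ to conclude $\tfrac{1}{1-2\rho} \leq 1+4\rho$. The only cosmetic difference is that the paper introduces the auxiliary value $b' = \lceil b(1-2\rho)\rceil$ and argues $b' \leq \lfloor b(1-\rho)\rfloor$, whereas you bound $b - \lceil b\rho\rceil \geq b(1-2\rho)$ directly; the substance is identical.
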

We call the value $\rho$ the \emph{compression factor}.

\begin{proof}
  Formally, the statement of the lemma is
  $\ptm{j}{\floor{b(1-\rho)}} \leq (1 + 4\rho) \ptm{j}{b}$.
  For the proof we
  set~$b' = \ceil{b(1-2\rho)} \leq b$.
  Since~$b \geq \frac{1}{\rho}$ we have~$b\rho \geq 1$
  and thus~$b' \leq \floor{b(1-\rho)}$.
  This implies~$\ptm{j}{\floor{b(1-\rho)}} \leq \ptm{j}{b'}$
  Because our jobs are monotonic we have
  \begin{equation}
    \ptm{j}{b'} \times b' = \work{j}{b'} \leq \work{j}{b} = \ptm{j}{b} \times b \dpunct{.}
  \end{equation}
  Hence (and because~$1-2\rho \geq 1/2$) it follows that
  \begin{equation}
    \begin{aligned}
      \ptm{j}{b'} &\leq \ptm{j}{b} \times \frac{b}{b'} \leq \ptm{j}{b} \times \frac{b}{b(1-2\rho)} \\
      &= \ptm{j}{b} \times \parens[a]{\frac{1 - 2\rho}{1 - 2\rho} + \frac{2\rho}{1-2\rho}}  \\
      &\leq (1 + 4\rho) \ptm{j}{b}
    \end{aligned}
  \end{equation}
  and the lemma follows.
\end{proof}

Our second allotment rule has two steps.
\begin{enumerate}
  \item Allot $a_j = \procnum{j}{d}$ processors to each job~$j$.
  \item Compress each job that is allotted to at least $\frac{4}{\epsilon}$ processors
    with a factor of $\rho = \frac{\epsilon}{4}$.
\end{enumerate}
Note that $\rho \leq \frac{1}{4}$ because we assumed $\epsilon \leq 1$.
According to Lemma~\ref{lemma:compression},
each job has processing time at most $(1 + \epsilon) d$ with this allotment rule.

We claim that the resulting schedule requires at most $m$~processors.
Assume that this is not the case after the first step, i.e.~$\sum_{j \in J} a_j > m$,
otherwise the statement clearly holds.
Then the number of required processors is still bounded:

\begin{lemma}
  \label{lemma:first-step}
  If $d \geq d^*$ we have $\sum_{j \in J} a_j < m + n$.
\end{lemma}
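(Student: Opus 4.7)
The plan is to compare the greedy allotment $a_j = \procnum{j}{d}$ to an optimal allotment and use monotony to transfer a work bound. Fix an optimal schedule with makespan $d^* \le d$ and allotment $a^*$; since $\ptm{j}{a^*_j} \le d^* \le d$, the definition of $\procnum{j}{d}$ as the \emph{minimum} processor count achieving processing time at most $d$ immediately gives $a_j \le a^*_j$. By monotony of the work function this yields $\work{j}{a_j} \le \work{j}{a^*_j}$, and summing over all jobs I get
\begin{equation}
  \sum_{j \in J} \work{j}{a_j} \;\le\; \sum_{j \in J} \work{j}{a^*_j} \;\le\; m\,d^* \;\le\; m\,d,
\end{equation}
where the middle inequality is the trivial area argument for the optimal schedule.

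Next I would extract a lower bound on $\work{j}{a_j}$ in terms of $a_j$ by using the minimality of $a_j$ in the reverse direction. For every job with $a_j \ge 2$, we must have $\ptm{j}{a_j - 1} > d$ (otherwise $a_j - 1$ would already suffice), so
\begin{equation}
  \work{j}{a_j} \;\ge\; \work{j}{a_j - 1} \;=\; (a_j - 1)\,\ptm{j}{a_j-1} \;>\; (a_j - 1)\,d,
\end{equation}
again using monotony of the work for the first step. Jobs with $a_j = 1$ contribute $0$ on the right-hand side of this estimate and will be accounted for separately.

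Summing the preceding inequality over the set $J_{\ge 2} = \{j \in J : a_j \ge 2\}$ and dividing by $d$, I combine it with the upper bound from the first paragraph to obtain $\sum_{j \in J_{\ge 2}} (a_j - 1) < m$. Splitting the total allotment as
\begin{equation}
  \sum_{j \in J} a_j \;=\; \sum_{j \in J} 1 \;+\; \sum_{j \in J_{\ge 2}} (a_j - 1) \;<\; n + m,
\end{equation}
which is exactly the desired bound.

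I do not expect any real obstacle: the only subtlety is that jobs with $a_j = 1$ do not admit the bound $\ptm{j}{a_j-1} > d$, so they must be separated out and collected into the additive $n$ term, and one needs the strict inequality $\ptm{j}{a_j-1} > d$ (not $\ge$) to pass from the non-strict work bound $\sum \work{j}{a_j} \le m d$ to the strict conclusion $\sum_j a_j < m + n$.
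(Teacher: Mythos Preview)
Your proof is correct and follows essentially the same argument as the paper: both use minimality of $a_j$ to get $a_j \le a_j^*$ and $\ptm{j}{a_j-1} > d$, then combine monotony of the work function with the area bound $\sum_j \work{j}{a_j^*} \le md$ to conclude $\sum_{j:\,a_j\ge 2}(a_j-1) < m$. The only organizational difference is that the paper first reduces to the case where every $a_j > 1$ by removing the jobs with $a_j = 1$, whereas you keep all jobs and absorb those with $a_j = 1$ directly into the additive $n$ via the decomposition $\sum_j a_j = n + \sum_{j \in J_{\ge 2}}(a_j - 1)$; your version is arguably slightly cleaner.
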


\begin{proof}
  Let $J' = \setst{j \in J}{a_j = 1}$ and assume the statement holds
  if we remove the jobs in~$J'$ and their allotted machines,
  i.e.~$\sum_{j \in J \setminus J'} a_j < (m-\card{J'}) + (n-\card{J'})$.
  Then $\sum_{j \in J} a_j = \sum_{j \in J \setminus J'} a_j + \card{J'}
  < m + n - \card{J'} \leq m + n$.
  
  It is therefore sufficient to show the statement for jobs with $a_j > 1$.
  Assume that $d \geq d^*$.
  Then there is a schedule with makespan at most~$d$.
  For each job~$j$ let $a_j^*$ be the number of allotted processors in this schedule.
  Then $a_j \leq a_j^*$.
  Using the monotony of the work function, we have
  \begin{equation}
    \begin{aligned}
      \parens[4]{\sum_{j \in J} (a_j - 1)} d &= \sum_{j \in J} (a_j - 1)d  \\
      &< \sum_{j \in J} (a_j - 1) \ptm{j}{a_j -1} \\
      &= \sum_{j \in J} \work{j}{a_j - 1} \\
      &\leq \sum_{j \in J} \work{j}{a_j} 
       \leq \sum_{j \in J} \work{j}{a_j^*}
       \leq m d \dpunct{.}
    \end{aligned}
  \end{equation}
  Therefore $\parens[2]{\sum_{j \in J} a_j} - n = \sum_{j \in J} (a_j - 1) < m$,
  proving the lemma.
\end{proof}

Now partition the jobs into narrow and wide jobs, $J = J_N \cup J_W$.
The wide jobs are those that are compressed in the second step,
i.e.~$J_W = \setst{j \in J}{\procnum{j}{d} \geq \frac{1}{\rho}}$ and $J_N = J \setminus J_W$.
Let $\alpha = \sum_{j \in J_W} a_j$ and $\beta = \sum_{j \in J_N} a_j$.

In the second step, at least $\rho \alpha$ processors are freed.
By definition of the narrow jobs we have $\beta \leq n \frac{1}{\rho} = 4\frac{n}{\epsilon} \leq \frac{m}{2}$.
Since we assumed that $\alpha + \beta > m$ we have $\alpha > \frac{m}{2}$.
Therefore $\rho \alpha > \rho \frac{m}{2} \geq n$.
According to Lemma~\ref{lemma:first-step} we also have $\alpha + \beta \leq m + n$.
It follows that $(1-\rho)\alpha + \beta \leq m$, proving our claim.

To prove the claim about the first allotment rule,
we note that it cannot use more processors for any job, because it picks the minimum number of allotted processors when we target a makespan of $(1+\epsilon)d$.
Therefore, our algorithm is $(1+\epsilon)$-dual approximate,
proving Theorem~\ref{thm:fptas}.

\subsection{A PTAS for the General Case}

For the general case we can still achieve a PTAS.
When $m \geq 8 \frac{n}{\epsilon}$, simply use the previously described algorithm.
Otherwise apply the algorithm by Jansen and Thöle~\cite{jansen10}.
It is $(1 + \epsilon)$-approximate and has a running time polynomial in $n$ and $m$ (but exponential in $\frac{1}{\epsilon}$).
Since we use this algorithm only in the case that $m < 8 \frac{n}{\epsilon}$,
the running time is polynomial even when a compact input encoding is used for the processing times.
\section{A Linear $(\frac{3}{2} + \epsilon)$-Approximation}
\label{sec:approximation}

Unfortunately, the running time of the PTAS used in the last section is rather prohibitive.
We therefore want to develop more efficient algorithms.
The algorithms we present in this section
are modifications of the algorithm due to Mounié, Rapine, and Trystram~\cite{mounie07}
which has running time~$\Landau{<=}{nm}$.
We aim for a fully polynomial algorithm
with a running time that depends only linearly on the number~$n$ of jobs.

We achieve this goal with several modification to the original algorithm.
Before explaining our enhancements, we describe the original algorithm
by Mounié, Rapine, and Trystram.
In section~\ref{sec:fully-polynomial}
we show how to modify the algorithm
such that the running time is logarithmic in the number~$m$ of of machines.
With these modifications, the dependence of the running time on~$n$ actually increases.
We can however improve upon the general idea in section~\ref{sec:bounded-kp}.

\subsection{The Original Algorithm}

The algorithm is $\frac{3}{2}$-dual approximate.
Let $d$ be the target makespan.
The algorithm is based on the observation that all jobs
with a running time larger than $\frac{d}{2}$ in a feasible schedule of makespan~$d$ can be executed in parallel.

\paragraph*{Removing the Small Jobs}

We partition the jobs into small and big jobs, $J = J_S(d) \cup J_B(d)$.
Small jobs are jobs that complete in time $\frac{d}{2}$ on one machine,
i.e.~jobs~$j \in J$ with $\ptm{j}{1} \leq \frac{d}{2}$.
All other jobs will be denoted as big jobs.
We remove the small jobs from the instance,
they will be re-added in a greedy manner at the end of the algorithm.
Let $W_S(d) = \sum_{j \in J_S(d)} \ptm{j}{1}$ be the total work of these jobs.

\paragraph*{Finding a Preliminary Schedule}
\label{sec:knapsack}

In this step, a schedule for the big jobs is constructed
by placing them in two shelves,
shelf~$S_1$ with processing time~$d$ and shelf~$S_2$ with processing time~$\frac{d}{2}$.
The shelves are scheduled after each other for a total makespan of~$\frac{3}{2}d$.
Since a feasible schedule of this type may not exist,
we allow the second shelf to use more than~$m$ processors, see \cref{fig:2-shelf-schedule}.
We call such a schedule a two-shelf-schedule.
\begin{figure}
  \centering
  \begin{tikzpicture}[x=0.25cm,y=0.25cm]
    \draw[fill=gray!20] (0,0) rectangle (24,15);

 	  \draw[|-|] (0,0) -- (0,10) node[pos=0,left] {\footnotesize $0$} node[pos=1,left] {\footnotesize $d$};
 	  \draw[-|] (0,10) -- (0,15) node[pos=1,left] {\footnotesize $\frac{3}{2}d$};
    \draw[<->,>=latex] (0,-1) -- node[below] {$m$} (24,-1);
    
    \namedpnode[fill=white]{x-min=0,y-min=0,width=4.5,height=7}{J1}{}
    \namedpnode[fill=white]{right of=J1,y-min=0,width=3,height=9.5}{J2}{}
    \namedpnode[fill=white]{right of=J2,y-min=0,width=7,height=9}{J3}{}
    \namedpnode[fill=white]{right of=J3,y-min=0,width=5,height=8.5}{J4}{}
    \namedpnode[fill=white]{right of=J4,y-min=0,width=1,height=8.5}{J5}{}
    \namedpnode[fill=white]{right of=J5,y-min=0,width=1,height=7}{J6}{}
    \namedpnode[fill=white]{right of=J6,y-min=0,width=1,height=6.5}{J7}{}

    \namedpnode[fill=white]{x-min=0,y-min=10,width=7,height=4}{J8}{}
    \namedpnode[fill=white]{right of=J8,y-min=10,width=9,height=4.5}{J9}{}
    \namedpnode[fill=white]{right of=J9,y-min=10,width=6,height=5}{J10}{}
    \namedpnode[fill=white]{right of=J10,y-min=10,width=6,height=4.5}{J11}{}

    \draw[dashed] (J11.south east) |- (24,15);
    \pnode[draw=none,fill=yellow,fill opacity=0.1,text opacity=0.75]{x-min=0,y-min=0,x-max=24,y-max=10}{\Large $\mathbf{S_1}$}
    \pnode[draw=none,fill=red,fill opacity=0.1,text opacity=0.75]{x-min=0,y-min=10,align right=(J11.east),y-max=15}{\Large $\mathbf{S_2}$}
    
  \end{tikzpicture}
  \caption{Example of an infeasible two-shelf-schedule}
  \label{fig:2-shelf-schedule}
\end{figure}
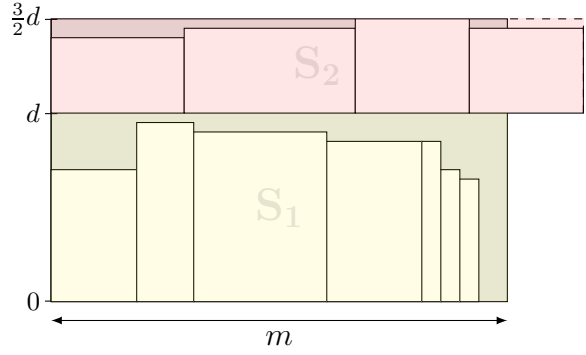
A two-shelf-schedule can be found by solving a knapsack problem
where shelf~$S_1$ uses at most~$m$ processors and the profit of each job
is the amount of work saved when it is scheduled in shelf~$S_1$ instead of shelf~$S_2$,
i.e.~$v_j(d) = \work{j}{\procnum{j}{\frac{d}{2}}} - \work{j}{\procnum{j}{d}}$.
Note that the monotony implies that $v_j(d) \geq 0$.
If $\procnum{j}{d}$ is undefined for any job~$j$, i.e.~$\ptm{j}{m} > d$,
then we can safely reject $d$.
Each job~$j \in J_B(d)$ for which $\procnum{j}{\frac{d}{2}}$ is undefined,
i.e.~$\ptm{j}{m} > \frac{d}{2}$,
must be scheduled in~$S_1$.
Those jobs can be easily handled by removing them from the knapsack problem
and reducing the capacity accordingly.
In order to keep the notation simple,
we will however assume that no such jobs exist.
We denote the knapsack problem
\begin{equation}
  \begin{gathered}
    \max_{J' \subseteq J_B(d)} \sum_{j \in J'} v_j(d) \qquad
    \text{s.t.}\quad \sum_{j \in J'} \procnum{j}{d} \leq m
  \end{gathered}
\end{equation}
by $\KP{J_B(d)}{m,d}$ and the profit of an optimal solution by~$\KPOPT{J_B(d)}{m,d}$.
Solving the knapsack problem requires time $\Landau{<=}{nm}$
with a standard dynamic programming approach.
Calculating the required values of $\gamma_j$ can be done beforehand in time~$\Landau{<=}{n \log m}$,

Let $J'$ be a solution to this knapsack problem.
Then we can create a two-shelf-schedule
by placing the jobs $J'$ in $S_1$ and the jobs $J_B(d) \setminus J'$ in~$S_2$.
The work of this schedule is
\begin{equation}
  \label{eq:two-shelf-work}
  \begin{aligned}
    W(J',d) &= \sum_{j \in J'} \work{j}{\procnum{j}{d}} +
\sum_{j \in J_B(d) \setminus J'} \work{j}{\procnum{j}{\textstyle\frac{d}{2}}} \\
    &= \sum_{j \in J_B(d)} \work{j}{\procnum{j}{\textstyle\frac{d}{2}}} - \sum_{j \in J'} v_j(d) \dpunct{.}
  \end{aligned}
\end{equation}
At this point, we will reject $d$ if $W(J',d)$ is found to be larger than $md - W_S(d)$.
The correctness of this step is shown in the following lemma.

\begin{lemma}[\cite{mounie07}]
  \label{lemma:two-shelf-schedule}
  If there is a schedule for all jobs with makespan~$d$,
  then there is a solution $J'$ to $\KP{J_B(d)}{m,d}$
  with $W(J', d) \leq md - W_S(d)$.
\end{lemma}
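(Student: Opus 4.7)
The plan is to construct the required $J'$ directly from a feasible schedule $\sigma$ of makespan $d$. For each job $j$, let $a_j^*$ denote the number of processors allotted to $j$ in $\sigma$, so $\ptm{j}{a_j^*} \leq d$. I would define
\[
  J' = \setst{j \in J_B(d)}{\ptm{j}{a_j^*} > \tfrac{d}{2}},
\]
that is, the big jobs whose processing time in $\sigma$ exceeds $d/2$. This mimics the shelf assignment we would want for the two-shelf-schedule, and the bookkeeping is then a matter of comparing work term by term.

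First I would verify feasibility for $\KP{J_B(d)}{m,d}$. Every job in $J'$ must be running at time $d/2$ in $\sigma$ (any job with processing time greater than $d/2$ and completion time at most $d$ is active there), so at time $d/2$ the processors used by $J'$ satisfy $\sum_{j \in J'} a_j^* \leq m$. Because $\ptm{j}{a_j^*} \leq d$, the definition of $\procnum{j}{d}$ gives $\procnum{j}{d} \leq a_j^*$, so $\sum_{j \in J'} \procnum{j}{d} \leq m$. Thus $J'$ is a feasible knapsack solution; and since an optimal solution to $\KP{J_B(d)}{m,d}$ can only increase the total profit, proving the bound for this particular $J'$ suffices.

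Next I would bound $W(J', d)$ using the expression in~\eqref{eq:two-shelf-work}. For $j \in J'$, monotony of the work together with $\procnum{j}{d} \leq a_j^*$ yields $\work{j}{\procnum{j}{d}} \leq \work{j}{a_j^*}$. For $j \in J_B(d) \setminus J'$ we have $\ptm{j}{a_j^*} \leq d/2$, hence $\procnum{j}{d/2} \leq a_j^*$, and monotony again gives $\work{j}{\procnum{j}{d/2}} \leq \work{j}{a_j^*}$. Summing,
\[
  W(J', d) \leq \sum_{j \in J_B(d)} \work{j}{a_j^*}.
\]

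Finally I would close the bound by accounting for the small jobs. The total work in $\sigma$ is at most $md$ (makespan $d$ on $m$ processors). For each small job $j \in J_S(d)$, monotony gives $\work{j}{a_j^*} \geq \work{j}{1} = \ptm{j}{1}$, so $\sum_{j \in J_S(d)} \work{j}{a_j^*} \geq W_S(d)$. Subtracting, $\sum_{j \in J_B(d)} \work{j}{a_j^*} \leq md - W_S(d)$, which combined with the previous inequality gives the claim. The only subtle point is the feasibility argument via the time-$d/2$ snapshot; everything else is a direct application of monotony, so I expect no real obstacle.
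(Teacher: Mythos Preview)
Your proposal is correct and follows essentially the same approach as the paper's proof: both take $J'$ to be the set of big jobs whose processing time in the feasible schedule exceeds $d/2$, use the time-$d/2$ snapshot to argue feasibility, and bound $W(J',d)$ term by term via monotony against the work in the given schedule. The paper's version is terser (it leaves the snapshot argument and the per-term monotony comparison implicit), but the underlying steps are the same.
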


\begin{proof}
  We only consider the big jobs in the feasible schedule,
  and their total work~$W_B$ is bounded from above by~$md - W_S(d)$,
  because $W_S(d)$ is a lower bound on the work of the small jobs.
  The jobs with processing time larger than~$\frac{d}{2}$
  induce a feasible solution~$J'$ to the knapsack problem.
  Then
  \begin{equation}
    \begin{aligned}
      W(J', d) &= \sum_{j \in J_B(d)} \work{j}{\procnum{j}{\textstyle\frac{d}{2}}} - \sum_{j \in J'} v_j \\
      &= \sum_{j \in J'} \work{j}{\procnum{j}{d}} 
       + \sum_{j \in J_B(d) \setminus J'}  \work{j}{\procnum{j}{\textstyle\frac{d}{2}}} \\
      &\leq W_B 
      \leq md - W_S(d) \dpunct{,}
    \end{aligned}
  \end{equation}
  so $J'$ is the claimed solution.
\end{proof}

Otherwise, the following step will transform the two-shelf-schedule
into a feasible schedule for all jobs.

\begin{lemma}[\cite{mounie07}]
  \label{lemma:make-feasible}
  Let $J' \subseteq J_B(d)$ be a solution of $\KP{J_B(d)}{m,d}$
  with $W(J', d) \leq md - W_S(d)$.
  Then we can find a schedule for all jobs in~$J$
  with makespan $\frac{3}{2}d$ in time~$\Landau{<=}{n \log n}$.
\end{lemma}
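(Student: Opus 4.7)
The plan is to massage the two-shelf-schedule into a feasible schedule of makespan $\frac{3}{2}d$ by pushing any overflow from $S_2$ down into the middle band $[\frac{d}{2},d]$, and then to pack the small jobs greedily. I would begin by fixing the allotments $\procnum{j}{d}$ for $j \in J'$ and $\procnum{j}{d/2}$ for $j\in J_B(d)\setminus J'$, and sorting both sets by non-increasing processor count (time~$\Landau{<=}{n \log n}$). The jobs of $J'$ are placed left-aligned at time~$0$; by the knapsack capacity they fit into $m$ processors and complete by time~$d$. I split $J'$ into \emph{tall} jobs with processing time in $(\frac{d}{2},d]$ and \emph{short} jobs with processing time at most~$\frac{d}{2}$; the tall jobs occupy a leftmost block of processors throughout $[0,d]$, so above the short jobs and the unused processors the band $[\frac{d}{2},d]$ is free. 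Next I place the jobs of $J_B(d)\setminus J'$---which all have processing time at most~$\frac{d}{2}$---in $[d,\frac{3}{2}d]$ as far to the right as possible, and any ``overflow'' jobs are moved down into this free portion of the band $[\frac{d}{2},d]$.

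The crux of the lemma is to show that this always yields a feasible packing. Focusing on the window $[\frac{d}{2},\frac{3}{2}d]$, the tall $J'$ jobs occupy some block of the $m$ processors during $[\frac{d}{2},d]$, and the jobs of $J_B(d)\setminus J'$ must be placed in the complementary region. The work inequality~$W(J',d)\leq md - W_S(d)$ bounds the total work of all big jobs, and because every $j\in J_B(d)\setminus J'$ has height at most~$\frac{d}{2}$, the total area of the overflow does not exceed the area of the complementary region. Turning this volumetric bound into an actual placement without violating the $m$-processor bound at any time requires a Next-Fit-Decreasing-Height-style argument that exploits the sort by non-increasing processor count together with the fact that each of the two target strips (the shelf $[d,\frac{3}{2}d]$ and the freed portion of $[\frac{d}{2},d]$) has height exactly $\frac{d}{2}$. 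This is the step where the shelf heights $d$ and $\frac{d}{2}$, the monotony of the jobs, and the work bound all interlock.

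The small jobs are then inserted greedily at single-processor allotment via list-scheduling. The free area left in the $m\times\frac{3}{2}d$ strip after the big jobs are placed is at least $\frac{3}{2}md-W(J',d)\geq\frac{md}{2}+W_S(d)$, so a standard list-scheduling argument in the spirit of~\cite{garey75} guarantees that no small job is scheduled past time~$\frac{3}{2}d$. The total running time is dominated by the initial sort and is $\Landau{<=}{n\log n}$.

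The main obstacle I expect is the geometric reasoning in the second paragraph: translating the purely volumetric bound given by the work inequality into a concrete placement respecting the $m$-processor constraint at every time. Everything else---sorting, the knapsack-aware arrangement of $J'$, and list-scheduling of the small jobs---is standard machinery.
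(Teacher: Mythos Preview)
Your plan diverges from the paper's proof in a way that creates a genuine gap. The paper does \emph{not} keep the allotments $\procnum{j}{d}$ and $\procnum{j}{d/2}$ fixed and merely rearrange the rectangles; it applies three transformation rules that \emph{change} allotments. Rule~(i) takes a job~$j\in S_1$ with $\ptm{j}{\procnum{j}{d}}\le\frac{3}{4}d$ and $\procnum{j}{d}>1$ and re-allots it to $\procnum{j}{d}-1$ processors; rule~(ii) stacks two single-processor $S_1$ jobs of height~$\le\frac{3}{4}d$ on one processor; rule~(iii) takes a job from~$S_2$ and re-allots it to $\procnum{j}{\frac{3}{2}d}$ processors before sending it to $S_0$ or~$S_1$. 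The feasibility claim is then Lemma~\ref{lemma:transform}, quoted from~\cite{mounie07}. This re-allotment is where monotony is really used, and it is precisely what your fixed-allotment NFDH plan lacks.

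Without re-allotment your scheme can fail. Take $m=10$, $d=1$, $J'=\{j_1\}$ with $\ptm{j_1}{1}=1$ (so $\procnum{j_1}{d}=1$), and sixteen jobs $j_2,\dots,j_{17}\in J_B(d)\setminus J'$ each with $\ptm{j}{1}=0.52$ and $\ptm{j}{2}=0.27$; these are big ($\ptm{j}{1}>\tfrac{d}{2}$), monotone ($\work{j}{1}=0.52\le 0.54=\work{j}{2}$), and have $\procnum{j}{d/2}=2$. Then $W(J',d)=1+16\cdot 0.54=9.64\le 10=md$, so the hypothesis holds. In your scheme $j_1$ blocks processor~$1$ on $[0,1]$, and the sixteen $2\times 0.27$ rectangles must be packed into the L-shaped region $[\tfrac12,\tfrac32]\times\{1,\dots,10\}$ minus $[\tfrac12,1]\times\{1\}$. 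Since two such rectangles cannot be stacked within height~$\tfrac12$, any level-based packing yields at most $4$ jobs in $[\tfrac12,1]$ (nine processors available) and $5$ in $[1,\tfrac32]$; even allowing levels to straddle time~$1$ one fits at most $13$ of the $16$ jobs before exceeding~$\tfrac32$. The paper's rule~(iii), by contrast, re-allots each $j_i$ to $\procnum{j_i}{\tfrac32 d}=1$ processor, after which rule~(ii) pairs them into eight columns of height $1.04\le\tfrac32$; together with $j_1$ this uses nine processors.

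A smaller point: your split of $J'$ into ``tall'' and ``short'' jobs is vacuous. For every big job one has $\ptm{j}{\procnum{j}{d}}>\tfrac{d}{2}$ (this is exactly the content of Lemma~\ref{lemma:heights-bound} with $s=d$), so every $j\in J'$ is tall and the free portion of $[\tfrac{d}{2},d]$ is only $m-\sum_{j\in J'}\procnum{j}{d}$ processors wide, not more. Your handling of the small jobs, on the other hand, is essentially the paper's Lemma~\ref{lemma:small} and is fine.
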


\subsubsection{Proof of Lemma~\ref{lemma:make-feasible}}

\paragraph{Obtaining a Feasible Schedule}
\label{sec:transformation-rules}

We exhaustively apply three transformation rules
that move some jobs to a new shelf~$S_0$.
The jobs in $S_0$ will be scheduled concurrent to $S_1$ and $S_2$
in $\frac{3}{2} d$ time units. creating a three-shelf-schedule.
For an example, see \cref{fig:3-shelf-schedule}.
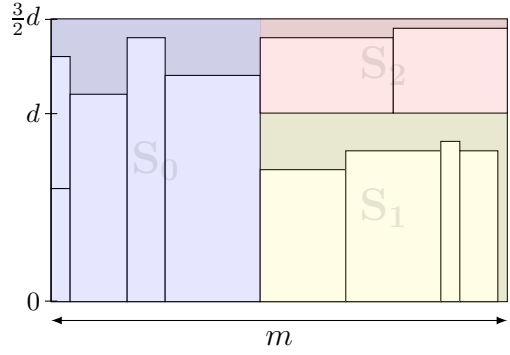
\begin{figure}
  \centering
  \begin{tikzpicture}[x=0.25cm,y=0.25cm]
    \draw[fill=gray!20] (0,0) rectangle (24,15);
    \draw[<->,>=latex] (0,-1) -- node[below] {$m$} (24,-1);
    
    \namedpnode[fill=white]{x-min=0,y-min=0,width=1,height=6}{J7}{}
    \namedpnode[fill=white]{x-min=0,above of=J7,width=1,height=7}{J6}{}
    \namedpnode[fill=white]{right of=J7,y-min=0,width=3,height=11}{J9}{}
    \namedpnode[fill=white]{right of=J9,y-min=0,width=2,height=14}{J2}{}
    \namedpnode[fill=white]{right of=J2,y-min=0,width=5,height=12}{J3}{}

    \namedpnode[fill=white]{x-min=11,y-min=0,width=4.5,height=7}{J1}{}
    \namedpnode[fill=white]{right of=J1,y-min=0,width=5,height=8}{J4}{}
    \namedpnode[fill=white]{right of=J4,y-min=0,width=1,height=8.5}{J5}{}
    \namedpnode[fill=white]{right of=J5,y-min=0,width=2,height=8}{J10}{}

    \namedpnode[fill=white]{x-min=11,y-min=10,width=7,height=4}{J8}{}
    \namedpnode[fill=white]{right of=J8,y-min=10,width=6,height=4.5}{J11}{}

    \pnode[draw=none,fill=blue,fill opacity=0.1,text opacity=0.75]{x-min=0,y-min=0,x-max=11,y-max=15}{\Large $\mathbf{S_0}$}
    \pnode[draw=none,fill=yellow,fill opacity=0.1,text opacity=0.75]{x-min=11,y-min=0,x-max=24,y-max=10}{\Large $\mathbf{S_1}$}
    \pnode[draw=none,fill=red,fill opacity=0.1,text opacity=0.75]{x-min=11,y-min=10,x-max=24,y-max=15}{\Large $\mathbf{S_2}$}

 	  \draw[|-|] (0,0) -- (0,10) node[pos=0,left] {\footnotesize $0$} node[pos=1,left] {\footnotesize $d$};
 	  \draw[-|] (0,10) -- (0,15) node[pos=1,left] {\footnotesize $\frac{3}{2}d$};
  \end{tikzpicture}
  \caption{An example of a schedule after applying the transformation rules}
  \label{fig:3-shelf-schedule}
\end{figure}
\begin{enumerate}
  \item If job $j \in S_1$ has processing time~$\ptm{j}{\procnum{j}{d}} \leq \frac{3}{4} d$
    and $\procnum{j}{d} > 1$, then allocate j to $\procnum{j}{d} - 1$ processors in $S_0$. \label{item:rule-1}
  \item If two jobs $j, j' \in S_1$ each have processing time at most~$\frac{3}{4} d$
    and $\procnum{j}{d} = \procnum{j'}{d} = 1$, then schedule $j$ and $j'$
    sequentially on one processor in $S_0$.
    If there is only one job~$j$ with $\ptm{j}{\procnum{j}{d}} \leq \frac{3}{4} d$ and $\procnum{j}{d} = 1$,
    a special case applies~\cite{mounie04}: if also a job~$j'$ in $S_1$ with $\ptm{j'}{\procnum{j'}{d}} > \frac{3}{4} d$
    and $\ptm{j}{\procnum{j}{d}} + \ptm{j'}{\procnum{j'}{d}} \leq \frac{3}{2} d$ exists,
    $j$ is scheduled on top of $j'$ in $S_0$.
    Conceptually, $j'$ is split into two parts. One is allocated to one processor in $S_0$,
    while the remaining part remains in $S_1$. \label{item:rule-2}
  \item Let $p_0$ and $p_1$ denote the number of processors required by $S_0$ and $S_1$, respectively.
    Let $q = m - (p_0 + p_1)$.
    If there is $j \in S_2$ with $\ptm{j}{q} \leq \frac{3}{2} d$,
    allocate $j$ with $p = \procnum{j}{\frac{3}{2} d} \leq q$ processors
    to $S_0$ if $\ptm{j}{p} > d$ and to $S_1$ otherwise. \label{item:rule-3}
\end{enumerate}

The resulting schedule is indeed feasible.
\begin{lemma}[\cite{mounie07}]
  \label{lemma:transform}
  When none of the transformation rules are applicable, the schedule uses at most~$m$ processors.
\end{lemma}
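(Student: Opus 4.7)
The plan is to verify that at every time $t \in [0, \frac{3}{2}d]$ the columns used across the three shelves sum to at most $m$. Because $S_0$ runs in parallel with $S_1$ on $[0, d]$ and with $S_2$ on $[d, \frac{3}{2}d]$, while $S_1$ and $S_2$ do not overlap in time, this reduces to proving $p_0 + p_1 \leq m$ and $p_0 + p_2 \leq m$, where $p_i$ denotes the total width of shelf $S_i$ after all rules have been exhausted.

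For the first inequality I would show that $p_0 + p_1 \leq m$ is an invariant maintained by every rule application. The invariant holds initially, since $p_0 = 0$ and $p_1 \leq m$ by the knapsack capacity constraint used to construct the two-shelf schedule. A short case analysis then shows that Rules 1 and 2 strictly decrease $p_0 + p_1$: Rule 1 removes $\procnum{j}{d}$ columns from $S_1$ and adds $\procnum{j}{d} - 1$ to $S_0$, the normal case of Rule 2 packs two single-processor jobs sequentially onto one column of $S_0$, and the special case of Rule 2 adds one column to $S_0$ while freeing two columns from $S_1$ (after a careful count that treats the split job consistently). Rule 3 increases $p_0 + p_1$ by exactly $\procnum{j}{\frac{3}{2}d}$, and its precondition guarantees that this value is at most $q = m - p_0 - p_1$, so the invariant survives.

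For the second inequality I would argue by contradiction. Assuming $p_0 + p_2 > m$, the termination of Rule 3 guarantees that every remaining job $j \in S_2$ satisfies $\ptm{j}{q} > \frac{3}{2}d$, and since its original allotment in $S_2$ is $\procnum{j}{\frac{d}{2}} \geq \procnum{j}{\frac{3}{2}d} > q$, each such job contributes work strictly greater than $q \cdot \frac{3}{2}d$. Combining this with a work lower bound from $S_1$ (where after the termination of Rules 1 and 2 every job has processing time exceeding $\frac{3}{4}d$ on its allocation, with one small exception) and a work lower bound from $S_0$ (where jobs moved by Rule 1 have processing time exceeding $d$ by the minimality of $\procnum{j}{d}$), I expect the total work to strictly exceed the bound $W(J', d) \leq md - W_S(d)$ supplied by the hypothesis, a contradiction.

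The main obstacle is the second argument: choosing the per-shelf lower bounds tight enough to force the contradiction without double-counting jobs that have migrated between shelves, and handling the degenerate cases (the situation $q = 0$ where Rule 3 is vacuously non-applicable, and the single-job exception left behind by Rule 2) explicitly enough to be certain the bound is uniform across all configurations.
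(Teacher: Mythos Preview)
The paper does not actually prove this lemma: it is stated with a citation to Mouni\'e, Rapine, and Trystram~\cite{mounie07}, and no argument is given in the present paper beyond that reference. There is therefore no ``paper's own proof'' to compare against.

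That said, your outline is the standard one and matches the argument in the cited source. The invariant $p_0 + p_1 \leq m$ is established exactly as you describe, rule by rule; your case analysis for Rules~1--3 (including the special case of Rule~2) is correct. For $p_0 + p_2 \leq m$ the contradiction via a work lower bound is also the right mechanism. The obstacles you flag are real: to make the bound go through one needs, in addition to the per-job inequality $\work{j}{\procnum{j}{d/2}} \geq \work{j}{q} > \tfrac{3}{2}qd$ from termination of Rule~3, the per-column height bounds in $S_0$ (each column exceeds~$d$, by minimality of $\gamma_j$) and in $S_1$ (each column exceeds~$\tfrac{3}{4}d$, with at most one exception of height~$>\tfrac{d}{2}$). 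The degenerate case $q = 0$ and the lone Rule~2 survivor require separate treatment, as you note, but both are handled in the cited paper by the same area-counting argument carried out with slightly sharper constants. Your plan is sound; what remains is bookkeeping rather than a missing idea.
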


Since the transformation rules do not increase the number of processors allotted to any job,
the work does not increase.
In particular, if the two-shelf-schedule has total work at most $md - W_S(d)$,
then the resulting schedule also has total work at most $md - W_S(d)$.

Note that any job's allotment changes twice only if it is moved from~$S_2$ to~$S_1$ by rule~\labelcref{item:rule-3}
and then from~$S_1$ to~$S_0$ by rules~\labelcref{item:rule-1} or~\labelcref{item:rule-2}.
Therefore, the transformations can be exhaustively applied to a two-shelf-schedule in time~$\Landau{<=}{n \log n}$:
Scan the shelf~$S_1$ and classify each job into one of three categories
\begin{itemize}
  \item $\ptm{j}{\procnum{j}{d}} \leq \frac{3}{4} d$ and $\procnum{j}{d} > 1$,
  \item $\ptm{j}{\procnum{j}{d}} \leq \frac{3}{4} d$ and $\procnum{j}{d} = 1$, and
  \item $\ptm{j}{\procnum{j}{d}} > \frac{3}{4} d$.
\end{itemize}
Jobs of the first category can immediately be moved to~$S_0$ according to rule~\labelcref{item:rule-1}.
Jobs of the second category are stored and moved to~$S_0$ in pairs as described in rule~\labelcref{item:rule-2}.
This may leave one unpaired job an the end.
Jobs of the third category are stored in a min-heap with the processing time~$\ptm{j}{\procnum{j}{d}}$ as key.
This heap can be used to pair unpaired jobs from the second category
in accordance to the special case of rule~\labelcref{item:rule-2}.
Scanning shelf~$S_1$ in this fashion can be done in time~$\Landau{<=}{n \log n}$.
Afterwards, we process shelf~$S_2$ and check if rule~\labelcref{item:rule-3} applies.
If it does, and the job should be moved to~$S_1$, we immediately categorize it
and apply rules~\labelcref{item:rule-1} and~\labelcref{item:rule-2} to it.
This requires time~$\Landau{<=}{\log n}$ per job in~$S_2$.
We also need to compute some of the values $\procnum{j}{d}$, $\procnum{j}{\frac{d}{2}}$,
and $\procnum{j}{\frac{3}{2}d}$ for each job~$j \in J_B(d)$,
resulting in an overall running time of~$\Landau{<=}{n (\log n + \log m)}$.

\paragraph{Re-adding the Small Jobs}

We modify the schedule such that the free time on each processor is adjacent:
jobs in $S_0$ and $S_1$ start as early as possible, while the jobs in $S_2$ finish at time $\frac{3}{2} d$.
We now allocate the small jobs to the free time intervals with a next-fit approach\footnote{in the work of Mounié, Rapine, and Trystram~\cite{mounie07}, a different allocation rule is used,
but the proof is similar.}.
The small jobs are processed in an arbitrary order.
The current job~$j$ gets assigned to the next machine with load at most~$\frac{3}{2}d-\ptm{j}{1}$.

\begin{lemma}
  \label{lemma:small}
  Using a next-fit algorithm, all small jobs can be integrated into a three-shelf-schedule,
  provided its total work is at most $md - W_S(d)$. 
\end{lemma}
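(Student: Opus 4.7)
The plan is to reduce the claim to a simple volume-counting argument showing that next-fit never runs out of space for a small job.

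First I would set up the free-time structure. After the compaction described in the text—pushing $S_0$ and $S_1$ jobs to start as early as possible, and shifting every $S_2$ job uniformly so that it finishes exactly at $\frac{3}{2}d$—every processor has its busy time form a prefix $[0, a_i]$ and possibly a suffix $[b_i, \frac{3}{2}d]$, with a single contiguous free interval $[a_i, b_i]$ in between. This shift is feasible because $S_1$ jobs have length at most $d$ and $S_2$ jobs have length at most $\frac{d}{2}$ on their allotted number of processors, so $a_i \le d$ and $b_i \ge d$; the uniform shift of each $S_2$ job preserves its parallel structure.

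Next I would lower-bound the total free volume on the machines at the moment next-fit attempts to place some particular small job~$j$. By the assumption of the lemma and the fact that the transformation rules do not increase work, the total work of big jobs in the three-shelf-schedule is at most $md - W_S(d)$. The total work of small jobs already placed is at most $W_S(d) - \ptm{j}{1}$. So the total occupied volume is at most $md - \ptm{j}{1}$, and the total remaining free volume is at least
\begin{equation}
  \tfrac{3}{2} m d - (md - \ptm{j}{1}) = \tfrac{1}{2} m d + \ptm{j}{1}.
\end{equation}

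For the main step I would argue by contradiction. Suppose next-fit fails on $j$. Then no processor has a free interval of length at least $\ptm{j}{1}$, so the sum of free interval lengths is strictly less than $m \cdot \ptm{j}{1}$. Combining with the lower bound gives $\tfrac{1}{2} m d + \ptm{j}{1} < m \cdot \ptm{j}{1}$, i.e.\ $\ptm{j}{1} > \frac{m d}{2(m-1)} > \frac{d}{2}$, contradicting $j \in J_S(d)$. Hence every small job is placed, and since each placement extends at most one processor's load and the resulting load stays at most $\frac{3}{2}d$, the final schedule has makespan $\frac{3}{2}d$.

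The main obstacle is really only the first step: one has to check that the compaction yields a single contiguous free interval per processor without destroying the feasibility of the big-job schedule (in particular the $S_2$ shift must be applied uniformly to preserve parallelism). Once that structural observation is in place, the counting argument is a one-line calculation, and the linear running time of next-fit follows since each placement can be handled in amortized constant time with a sweeping pointer across the machines.
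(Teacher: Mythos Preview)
Your volume-counting idea is the right one, but the sentence ``Then no processor has a free interval of length at least $\ptm{j}{1}$'' is not justified for next-fit as described in the paper. In next-fit a machine is discarded the moment the \emph{current} small job fails to fit, and is never revisited. So when the algorithm finally fails on job~$j$, the machines that were abandoned earlier were abandoned because some \emph{other} small job~$j'$ did not fit; all you know about such a machine is that its free interval has length less than $\ptm{j'}{1}$, which may well exceed $\ptm{j}{1}$. Your upper bound $\sum_i (\text{free}_i) < m \cdot \ptm{j}{1}$ therefore does not follow, and the chain $\tfrac{1}{2}md + \ptm{j}{1} < m\,\ptm{j}{1}$ breaks.

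The repair is immediate: every job that ever caused a discard is small, so each discarded machine has free interval strictly less than $\tfrac{d}{2}$, equivalently load strictly greater than~$d$. Summing gives total load $> md$, contradicting that the big-job work is at most $md - W_S(d)$ and the placed small jobs contribute at most $W_S(d)$. This is exactly the paper's argument; once you use the uniform bound $\tfrac{d}{2}$ instead of the job-specific $\ptm{j}{1}$, the extra $+\ptm{j}{1}$ term in your lower bound is no longer needed and your proof collapses to the paper's. Your structural remarks about the single contiguous free interval per processor are fine and match what the paper uses implicitly.
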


\begin{proof}
  Assume for the sake of contradiction that a small job cannot be added.
  Then all machines were discarded for not fitting some job,
  i.e.~their load is larger than~$\frac{3}{2}d - \ptm{j}{1}$ for a small job~$j$.
  Since $j$ is small, i.e.~$\ptm{j}{1} \leq \frac{d}{2}$,
  those machines load was larger than~$\frac{3}{2}d - \frac{1}{2}d = d$.
  Thus the total load of all machines is larger than~$md$, a contradiction.
\end{proof}

This can be implemented to run in linear time.
We first group adjacent processors that have the same amount of free time.
Since the amount of free time only changes when a different job is scheduled on a processor,
there can be at most $\Landau{<=}{n}$~many groups.
The groups are processed in arbitrary order.
If the free time of a group is at least~$\ptm{j}{1}$, where $j$ is the current job,
the group is split into two groups,
the first containing one processor, and the second containing the other processors.
The current job is then added to the processor in the first group, and its free is updated.
Otherwise the whole group is discarded.
This is possible in time linear in the number of small jobs and groups,
achieving an overall running time of~$\Landau{<=}{n}$.

\subsection{Knapsack with Compressible Items}
\label{sec:fully-polynomial}

The dominating part of the algorithm is the solution of the knapsack problem.
One might be tempted to use one of the known FPTASs for the knapsack problem
to find a solution with slightly suboptimal profit.
However, the profit of the knapsack problem can be much larger than the work of the schedule,
such that a small decrease of the profit
can increase the work of the schedule by a much larger factor.
Instead, we treat the processor counts approximately,
despite the fact that the the available number of processors imposes a hard constraint.
We will employ compression to compensate for an increased processor usage of the solution.

For a cleaner notation, we define the knapsack problem with compressible items:
an instance of this problem is a tuple~$(I, I^\mathrm{c}, C, \rho)$,
where $I$ is a set of items.
An item~$i \in I$ has size~$\s{i}$ and profit~$\pr{i}$.
The items $I^\mathrm{c} \subseteq I$ can be compressed with factor~$\rho$,
and $C$~denotes the capacity.
A feasible solution to this instance is a set $I' \subseteq I$ such that
\begin{equation}
  \sum_{i \in I' \cap I^\mathrm{c}} (1-\rho) \s{i}
    + \sum_{i \in I' \setminus I^\mathrm{c}} \s{i} \leq C \dpunct{.}
\end{equation}
We denote the maximum profit of the instance by $\KPCOPT{I, I^\mathrm{c}, C, \rho}$.

\subsubsection{Simple Application to the Scheduling Problem}
\label{sec:application}

Identifying the jobs as items is straightforward:
Let~$J^C = \setst{j \in J_B(d)}{\procnum{j}{d} \geq \frac{1}{\rho}}$ be the compressible jobs,
where the compression factor~$\rho$ will be defined later
depending on the desired accuracy~$\epsilon$.
Set the knapsack sizes as $\s{j} = \procnum{j}{d}$ and the profits as $\pr{j} = v_j(d)$ for $j \in J_B(d)$.
If we have a solution $J'$ to the knapsack problem $(J_B(d), J^C, m, \rho)$,
we can compress the jobs in $J' \cap J^C$ such that $J'$ fits on $m$ processors.
Their processing time increases by a factor of at most~$(1+4\rho)$,
so $J'$ is a feasible solution to $\KP{J_B(d)}{m, (1+4\rho)d}$.

To make up for the increased makespan,
we use the following corollary to Lemma~\ref{lemma:make-feasible}
with $d' = (1 + 4\rho)d$.
\begin{corollary}
  \label{thm:approx}
  Let $d' \geq d$, and~$J' \subseteq J_B(d)$
  be a feasible solution of the knapsack problem $\KP{J_B(d)}{m,d'}$
  with~$W(J', d) \leq md' - W_S(d)$.
  Then we can, in time~$\Landau{<=}{n \log n}$,
  find a schedule with makespan at most $\frac{3}{2}d'$.
\end{corollary}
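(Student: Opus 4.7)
The plan is to reduce the claim to Lemma~\ref{lemma:make-feasible}, applied with target makespan $d'$ but reusing the original big/small partition based on $d$. First I would verify that this reuse is benign: every job in $J_S(d)$ satisfies $\ptm{j}{1} \leq d/2 \leq d'/2$ and so is still small relative to $d'$. Some jobs in $J_B(d)$ may themselves have become small with respect to $d'$, but they are perfectly valid inputs to the two-shelf construction and to the transformation rules of Section~\ref{sec:transformation-rules}, which are phrased purely in terms of processing times against a single target makespan.

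Second I would transfer the work bound to the new target. Since $d' \geq d$, we have $\procnum{j}{d'} \leq \procnum{j}{d}$ and $\procnum{j}{d'/2} \leq \procnum{j}{d/2}$, so monotony of $\work{j}{\cdot}$ gives $W(J', d') \leq W(J', d) \leq md' - W_S(d)$. By feasibility of $J'$ for $\KP{J_B(d)}{m, d'}$, the shelf $S_1$ obtained by allotting $\procnum{j}{d'}$ processors to each $j \in J'$ uses at most $m$ processors, so the two-shelf-schedule at heights $d'$ and $d'/2$ is well-defined and has total work at most $md' - W_S(d)$.

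At this point I would invoke the construction from the proof of Lemma~\ref{lemma:make-feasible} with $d$ replaced by $d'$ throughout: the transformation rules from Section~\ref{sec:transformation-rules}, together with Lemma~\ref{lemma:transform}, yield a feasible three-shelf-schedule for $J_B(d)$ with makespan $\frac{3}{2}d'$, and the re-addition of $J_S(d)$ via next-fit succeeds by the argument of Lemma~\ref{lemma:small}, using the bound $W(J', d') + W_S(d) \leq md'$ and the fact that every small job has one-processor time at most $d'/2$. The running-time bookkeeping is identical to the original, giving $\Landau{<=}{n \log n}$ overall. The only point I expect to need careful checking is that no step in Lemma~\ref{lemma:make-feasible}'s proof silently assumes $J_B = J_B(d')$; inspection of Section~\ref{sec:transformation-rules} and Lemma~\ref{lemma:small} confirms that each rule references only processing times at the current target, so the substitution is harmless.
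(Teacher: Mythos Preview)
Your route differs from the paper's. The paper does \emph{not} keep the big/small partition at threshold $d$; instead it passes to $J'' = J' \cap J_B(d')$, verifies that $J''$ is a feasible solution of $\KP{J_B(d')}{m,d'}$, and shows $W(J'',d') \leq md' - W_S(d')$ by tracking the work shift $\Delta W = W_S(d') - W_S(d)$ contributed by the jobs that became small at threshold~$d'$. This lets Lemma~\ref{lemma:make-feasible} be invoked verbatim, with its hypotheses matched exactly. Your approach trades that bookkeeping for a simpler work inequality, but at the cost of rerunning the construction of Lemma~\ref{lemma:make-feasible} on the set $J_B(d)$ at target~$d'$, which is not the setting in which the lemma is stated.

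The gap is in your final sentence. You say inspection of Section~\ref{sec:transformation-rules} and Lemma~\ref{lemma:small} suffices, but the correctness of the transformation phase rests on Lemma~\ref{lemma:transform}, whose proof is not in this paper (it is quoted from~\cite{mounie07}). That proof is an area argument that uses, among other things, that every job placed in the shelves is big relative to the current target; with your hybrid setup some jobs in $J_B(d)\setminus J_B(d')$ satisfy $\ptm{j}{1}\leq d'/2$, and you have not shown the argument survives this. It may well go through, but ``each rule references only processing times at the current target'' is a statement about the \emph{rules}, not about the \emph{proof} that exhaustive application yields feasibility. The paper's detour through $J''$ is precisely what sidesteps this issue and keeps Lemma~\ref{lemma:make-feasible} a black box.
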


\begin{proof}
  Let $J'' = J' \cap J_B(d')$.
  Then $J''$ is a feasible solution of $\KP{J_B(d')}{m,d'}$.
  Define $\Delta W = W_S(d') - W_S(d)$.
  Note that, when targeting makespan $\frac{3}{2}d'$,
  each job in $J' \setminus J''$ will be scheduled on one processor in shelf~1 and shelf~2,
  since these are exactly the jobs that have $\ptm{j}{1} \leq \frac{d'}{2}$.
  Therefore, they can be moved between shelves without affecting the work of the schedule,
  implying~$W(J'', d') = W(J', d')$.
  Using also the fact that the work functions are monotone, we obtain
  \begin{equation}
    \begin{aligned}
      W(J'', d') &= W(J', d') \\
      &= \sum_{j \in J'} \work{j}{\procnum{j}{d'}} + \sum_{\mathclap{j \in J_B(d') \setminus J'}} \work[2]{j}{\procnum[2]{j}{\textstyle\frac{d'}{2}}} \\
      &\leq \sum_{j \in J'} \work{j}{\procnum{j}{d}} + \sum_{\mathclap{j \in J_B(d') \setminus J'}} \work{j}{\procnum{j}{\textstyle\frac{d}{2}}} \\
      &\begin{multlined}
      = \sum_{j \in J'} \work{j}{\procnum{j}{d}} + \sum_{\mathclap{j \in J_B(d) \setminus J'}} \work{j}{\procnum{j}{\textstyle\frac{d}{2}}} \\
         - \sum_{\mathclap{j \in J_B(d) \setminus J_B(d')}} \work{j}{\procnum{j}{\textstyle\frac{d}{2}}} \end{multlined} \\
      &= W(J', d) - \Delta W \\
      &\leq md' - W_S(d) - \Delta W \\
      &= md' - W_S(d') \dpunct{.}
    \end{aligned}
  \end{equation}
  Applying Lemma~\ref{lemma:make-feasible} to $J''$ yields the desired schedule.
\end{proof}

To obtain a $\parens[2]{\frac{3}{2} + \epsilon}$-dual approximate algorithm,
we set $\rho = \frac{1}{6} \epsilon$.
Concerning the prerequisites of Corollary~\ref{thm:approx},
it is sufficient if the profit of $J'$ is at least $\KPCOPT{I, \emptyset, C, 0} = \KPOPT{J_B(d)}{m, d}$,
because then $W(J', d) \leq md - W_S(d)$, unless no schedule with makespan $d$ exists.

\begin{algorithm}
  \Input{$J, m, d, \epsilon$}
  $\rho \leftarrow \frac{1}{6}\epsilon$, $d' \leftarrow (1+4\rho)d$,
  $J^C \leftarrow \setst{j \in J_B(d)}{\procnum{j}{d} \geq \frac{1}{\rho}}$ \;
  \For{$j \in J_B(d)$}{
    Precompute $\procnum{j}{\frac{d}{2}}, \procnum{j}{d}$,
      $\procnum[2]{j}{\frac{d'}{2}}$, $\procnum{j}{d'}$, $\procnum{j}{\frac{3}{2}d'}$ \;
  }
  Find solution $J'$ to $(J_B(d), J^C, m, \rho)$ \label{line:knapsack}\;
  Apply Lemma~\ref{lemma:make-feasible} to $J' \cap J_B(d')$,
    obtain schedule for jobs $J$ with makespan $\frac{3}{2}d'$ \label{line:final-schedule}\;
  \lIf{schedule is infeasible}{
    \KwSty{reject} $d$}
  \lElse{
    \Return{the schedule}}
  
  \caption{Scheduling of monotone moldable jobs using knapsack with compressible items}
  \label{alg:scheduling-simple}
\end{algorithm}

Note that the algorithm schedules the jobs
with $\procnum{j}{d'}$ processors.
Compression was only used to show that
shelf~1 has at most~$m$ processors with these processor counts.

How one can find a solution to is discussed in the rest of this section.

\subsubsection{Separating Compressible and Incompressible Items}

Let $I = I_1 \cup I_2$ be a partition of the items.
Consider an optimal solution $I^* \subseteq I$ to the knapsack problem $(I, I^\mathrm{c}, C, \rho)$,
let $\alpha \geq \sum_{i \in I^* \cap I_1} \s{i}$ be the space available for items in $I_1$ in the solution,
and $\beta \geq \sum_{i \in I^* \cap I_2} \s{i}$ the space available for the other items.
Then we can solve the knapsack problems for $I_1$ and $I_2$ separately.

\begin{lemma}
  \label{lemma:knapsack-separate}
  $\KPCOPT{I, I^\mathrm{c}, C, \rho} \leq \KPCOPT{I_1, I^\mathrm{c} \cap I_1, \alpha, \rho}
  + \KPCOPT{I_2, I^\mathrm{c} \cap I_2, \beta, \rho}$,
  and equality holds if $\alpha + \beta = C$.
\end{lemma}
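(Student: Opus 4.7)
The plan is to prove both directions of the (in)equality by splitting and recombining feasible solutions, with careful attention to how compression interacts with the capacity constraint.

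For the upper bound $\KPCOPT{I, I^\mathrm{c}, C, \rho} \leq \KPCOPT{I_1, I^\mathrm{c} \cap I_1, \alpha, \rho} + \KPCOPT{I_2, I^\mathrm{c} \cap I_2, \beta, \rho}$, I would take the given optimal solution $I^*$ and write $I^* = I^*_1 \cup I^*_2$ with $I^*_k = I^* \cap I_k$. The key observation is that $I^*_k$ is a feasible solution to the sub-problem with capacity $\alpha$ (resp.\ $\beta$): since $(1-\rho)\s{i} \leq \s{i}$, the effective (post-compression) size of $I^*_k$ is dominated by its raw size, which in turn is bounded by $\alpha$ (resp.\ $\beta$) by assumption. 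Hence the profit of $I^*_k$ is at most $\KPCOPT{I_k, I^\mathrm{c} \cap I_k, \cdot, \rho}$, and summing over $k = 1,2$ (using that profits are additive over a disjoint union) gives the desired inequality.

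For the equality claim when $\alpha + \beta = C$, I would argue the reverse inequality. Let $I_1^\dagger$ and $I_2^\dagger$ be optimal solutions of the two sub-problems. Their union $I_1^\dagger \cup I_2^\dagger$ is feasible for the combined problem, because the feasibility inequalities
\begin{equation}
  \sum_{i \in I_k^\dagger \cap I^\mathrm{c}} (1-\rho) \s{i} + \sum_{i \in I_k^\dagger \setminus I^\mathrm{c}} \s{i} \leq \begin{cases} \alpha & k=1, \\ \beta & k=2, \end{cases}
\end{equation}
add to give an effective size at most $\alpha + \beta = C$. Its total profit is $\KPCOPT{I_1, I^\mathrm{c} \cap I_1, \alpha, \rho} + \KPCOPT{I_2, I^\mathrm{c} \cap I_2, \beta, \rho}$, showing $\KPCOPT{I, I^\mathrm{c}, C, \rho}$ is at least this sum; combined with the upper bound, equality follows.

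There is no real obstacle here; the only subtle point is bookkeeping around what $\s{i}$ means versus the actual capacity usage $(1-\rho)\s{i}$ for compressible items. Since the quantities $\alpha,\beta$ are defined via the raw sizes $\s{i}$, the slack $\s{i} - (1-\rho)\s{i} = \rho \s{i}$ on compressible items is implicitly absorbed into the upper-bound direction (making feasibility easier to verify), while in the equality direction we exploit that the two sub-problems' feasibility constraints are already written in effective (compressed) sizes and hence compose additively into the combined constraint with capacity $C = \alpha + \beta$.
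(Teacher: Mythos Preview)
Your proposal is correct and follows essentially the same approach as the paper: split the optimal solution $I^*$ along the partition and verify that each piece is feasible for its subproblem (using that $\alpha,\beta$ bound the raw sizes, which dominate the compressed sizes), then for the equality direction combine optimal subproblem solutions into a feasible solution of the full problem. The only cosmetic difference is that the paper phrases the upper-bound direction as a proof by contradiction, whereas you argue it directly.
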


\begin{proof}
  Let~$I_1^* \subseteq I_1$ and~$I_2^* \subseteq I_2$ be optimal solutions
  of~$(I_1, I^c \cap I_1, \alpha, \rho)$ and $(I_2, I^c \cap I_2, \beta, \rho)$, respectively.
  Assume for the sake of contradiction that
  \begin{equation}
    \KPCOPT{I, I^c, C, \rho} > \KPCOPT{I_1, I^c \cap I_1, \alpha, \rho} + \KPCOPT{I_2, I^c \cap I_2, \beta, \rho} \dpunct{.}
  \end{equation}
  Recall that $I^*$~is our optimal solution to~$(I, I^c, C, \rho)$.
  Then
  \begin{equation}
    \begin{aligned}
      \sum_{i \in I^* \cap I_1} & \pr{i} + \sum_{i \in I^* \cap I_2} \pr{i} \\
      &= \sum_{i \in I^*} v_j(d) = \KPCOPT{I, I^c, C, \rho} \\
      &> \KPCOPT{I_1, I^c \cap I_1, \alpha, \rho} + \KPCOPT{I_2, I^c \cap I_2, \beta, \rho} \dpunct{,}
    \end{aligned}
  \end{equation}
  therefore we either have $\sum_{i \in I^* \cap I_1} \pr{i} > \KPCOPT{I_1, I^c \cap I_1, \alpha, \rho}$
  and $\sum_{i \in I^* \cap I_1} \s{i} \leq \alpha$,
  or $\sum_{i \in I^* \cap I_2} \pr{i} > \KPCOPT{I_2, I^c \cap I_2, \beta, \rho}$
  and $\sum_{i \in I^* \cap I_2} \s{i} \leq \beta$,
  contradicting the optimality of $I_1^*$ and $I_2^*$.
  
  If $\alpha + \beta = C$,
  then the set~$I_1^* \cup I_1^*$ is a feasible solution of~$(I, I^c, C, \rho)$,
  hence
  \begin{equation}
    \KPCOPT{I, I^c, C, \rho} \geq \KPCOPT{I_1, I^c \cap I_1, \alpha, \rho} + \KPCOPT{I_2, I^c \cap I_2, \beta, \rho}
  \end{equation}
  and equality must hold.
\end{proof}

In our case, we partition $I = I^c \cup (I \setminus I^c)$,
i.e. $\alpha$ is the space available for compressible items.
To utilize this lemma, we need to know values for~$\alpha$ and~$\beta$.
Let $\beta_{\max}$ be an upper bound on the space used by incompressible jobs.
We can enumerate all $\beta_{\max} + 1$ possible values of~$\beta \leq \beta_{\max}$,
set $\alpha = C - \beta$,
and pick the best obtained solution.
A standard dynamic programming algorithm can be used to solve
the instances~$(I \setminus I^\mathrm{c}, \emptyset, \beta, 0)$
in time~$\Landau{<=}{n \beta_{\max}}$ each.

While the problem for the compressible items could be solved the same way,
we are looking for an algorithm that is polynomial in $\log C$.
We propose several techniques, exploiting the compressibility of the items.
First, we treat the size of compressible items approximately.
Second, we further bound the number of knapsack problems to solve
by~$\Landau{<=}{\frac{\log C}{\rho}}$
by using an approximate, possibly larger value for $\alpha$.
We also solve all knapsack problems in one pass.

\subsubsection{Solving the Knapsack Problem for Compressible Items}
\label{sec:dp-knapsack}

One way to implement the dynamic program for solving the knapsack problem exactly is given by Lawler~\cite{lawler79}:
assume there are $n_C \leq n$ compressible items~$\irange{i}{1}{n_C}$, and the capacity is $\alpha$.
A list~$L$ of pairs $(p, s)$ is initialized with the single pair $(0, 0)$.
In the $k$-th iteration, for each pair~$(p, s)$ in~$L$,
a new pair~$(p + \pr{i_k}, s + \s{i_k})$ is added to~$L$,
unless $s + \s{i_k} > \alpha$.
Thus, at the end of the $k$-th iteration, a pair~$(p, s)$
indicates that $p$ is the highest profit that can be obtained with the items~$\irange{i}{1}{k}$
and total size at most~$s$.
A pair $(p, s)$ is said to dominate~$(p', s')$ if $p \geq p'$ and $s \leq s'$.
After each modification, we remove dominated pairs from the list.
The optimum value then is~$\max \setst{p}{(p,w) \in L}$.
By storing additional backtracking information,
an optimal solution can be found in time~$\Landau{<=}{n_C \alpha} = \Landau{<=}{n_C m}$.

Let $\bar{n}$ be an upper bound to the number of compressible items in any solution.
In our scheduling setting,
such a bound is imposed by the fact that wide jobs are compressible.
A common approach is to round the sizes $\s{i}$ and the capacity~$\alpha$
down to the next multiple of $U = \frac{\rho}{(1-\rho) \bar{n}} \alpha$,
such that all sizes in pairs stored in~$L$ are multiples of~$U$.
This can be equivalently achieved as follows:
cover the range $0, \dotsc, \alpha$ with disjoint intervals of length~$U$,
i.e.~$[0, \alpha] \subseteq [0, U) \cup [U, 2U) \cup [2U, 3U) \cup \dots$.
This requires $\ceil{\frac{\alpha}{U}} = \Landau{<=}{\bar{n}}$ intervals.
On creation, the pairs $(p, s)$ are normalized
such that $s \in [\ell U, (\ell+1)U)$ implies $s = \ell U$,
i.e.~the value of $s$ is reduced by at most~$U$ such that it is a multiple of~$U$.
The profit of an optimal solution to the rounded instance can be calculated as~$\max \setst{p}{(p, s) \in L}$.
Since the actual width of the items does not decrease,
the solution may be up to $\bar{n} U$ units larger than the capacity,
but is small enough when we take into account that all items are compressible:
\begin{equation}
  (1 - \rho) \times \parens{\alpha + \bar{n} U}
    = (1 - \rho) \times \parens[a]{1 + \frac{\rho}{(1-\rho)}} \alpha = \alpha \dpunct{.}
  \label{eq:knapsack-error}
\end{equation}
The required running time of the algorithm is~$\Landau{<=}{n_C \bar{n}}$.

\subsubsection{Solving the Knapsack Problems in One Pass}
\label{sec:one-pass}

We first demonstrate how to solve the knapsack problems $(I \setminus I^\mathrm{c}, \emptyset, \beta, 0)$
for each~$\beta$ in some set $B$ in one pass.
For this we modify the dynamic programming approach by Lawler outlined in \cref{sec:dp-knapsack}
to solve the knapsack problem for several capacities.
Similar to the original algorithm, we build the list~$L$ of pairs~$(p, s)$,
but up to the largest capacity~$\max B$.
This requires time~$\Landau{<=}{n_I \max B}$,
where $n_I = n - n_C$ is the number of incompressible items.
A pair~$(p, s)$ in the list means that~$p$ is the best obtainable profit with capacity~$s$. 
For each $\beta \in B$ we now find the largest~$s \leq \beta$
such that a pair~$(p, s)$ is in the list.
Then $p$~is the optimal profit for capacity~$\beta$.
Thus all knapsack problems can be solved in time~$\Landau{<=}{n_I \max B}$.

For compressible items,
we additionally use the normalization technique from \cref{sec:dp-knapsack}.
Let $\alpha_{\min} > 0$ be a lower bound any non-zero~$\alpha$,
e.g.~the minimum size of a compressible item.
When solving a knapsack problem for several capacities, we cannot normalize all sizes in the same way.
Instead, we use adaptive normalization.
The idea is illustrated in \cref{fig:adaptive-rounding}.
To bound the running time, we also need to impose a requirement on the set of capacities.

\begin{lemma}
  \label{thm:adaptive-rounding}
  Let $A = \set{\alpha_1 < \cdots < \alpha_k}$~be a set of $k$~capacities such that
  for each $i \in \setrange{1}{k}$ we have
  \begin{equation}
    \label{eq:gap-alpha}
    \alpha_i - \alpha_{i-1} \leq \rho \alpha_i \dpunct{,}
  \end{equation}
  where $\alpha_0 = \alpha_{\min}$.
  Then all knapsack problems~$(I^c, I^c, \alpha, \rho)$ with~$\alpha \in A$ can be solved 
  with profit at least $\KPCOPT{I^c, \emptyset, \alpha, 0}$
  in time~$\Landau{<=}{n_C \bar{n} \card{A}}$.
\end{lemma}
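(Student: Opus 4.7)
My plan is to adapt the normalization technique from \cref{sec:dp-knapsack} so that the entire family of capacities is handled in a single dynamic program, using coarser intervals as the capacity grows. Specifically, I would set $U_i = \rho \alpha_i / ((1-\rho) \bar{n})$ for each $i \in \setrange{1}{k}$, and partition $[0, \alpha_k]$ into subintervals so that within each band $[\alpha_{i-1}, \alpha_i]$ every subinterval has length $U_i$, with $[0, \alpha_1]$ forming the first band. The gap condition~\eqref{eq:gap-alpha} gives $\alpha_i - \alpha_{i-1} \leq \rho \alpha_i = (1-\rho) \bar{n} U_i$, so each band contributes at most $\Landau{<=}{\bar{n}}$ subinterval starts; together these form a set $V$ of $\Landau{<=}{\bar{n} k}$ admissible positions.

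The algorithm itself would be a direct adaptation of the Lawler-style dynamic program from \cref{sec:dp-knapsack}: I maintain a list $L$ of pairs $(p, s)$ with $s \in V$, starting with $L = \{(0, 0)\}$. On processing a compressible item~$i$, each pair $(p, s) \in L$ spawns a successor $(p + \pr{i}, s')$ where $s'$ is the largest element of $V$ not exceeding $s + \s{i}$, provided $s + \s{i} \leq \alpha_k$; dominated pairs are then discarded so that at most one pair is kept per distinct position. Since $\card{L} \leq \card{V} + 1$ throughout, processing all $n_C$ items costs $\Landau{<=}{n_C \bar{n} k}$. For each $\alpha_i \in A$ I would report the maximum profit over pairs $(p, s) \in L$ with $s \leq \alpha_i$.

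Correctness splits into optimality and feasibility. Optimality is the easier half: an optimal solution of $(I^c, \emptyset, \alpha_i, 0)$ has true total size at most $\alpha_i$, so the DP path that tracks its items never exceeds position $\alpha_i$, and pruning preserves the maximum profit at each position, hence the reported value is at least $\KPCOPT{I^c, \emptyset, \alpha_i, 0}$. Feasibility is the main obstacle, because the rounding step in each insertion discards size information that must be absorbed by the compression margin. The key observation is that if the returned pair $(p, s)$ has $s \leq \alpha_i$, then every intermediate position on its DP path is also at most $s$ and thus lies in a band of width at most $U_i$; a simple induction along the path then shows that the true total size exceeds $s$ by at most $\bar{n} U_i$, so it is bounded by $\alpha_i + \bar{n} U_i = \alpha_i/(1-\rho)$. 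Applying the compression factor $\rho$ to every item in the returned solution then yields a compressed size of at most $\alpha_i$, as required.
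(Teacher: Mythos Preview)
Your construction and analysis are essentially those of the paper: adaptive normalization with band-dependent grid width $U_i=\frac{\rho}{(1-\rho)\bar n}\alpha_i$, a single Lawler-style pass over the compressible items, and feasibility via the observation that along any DP path ending at or below $\alpha_i$ each rounding step loses at most $U_i$. One step of the running-time argument does not go through as written, however. You declare the first band to be $[0,\alpha_1]$ and then appeal to the gap condition $\alpha_i-\alpha_{i-1}\le\rho\alpha_i$ to conclude that every band contributes $\Landau{<=}{\bar n}$ grid points; but for $i=1$ that inequality bounds $\alpha_1-\alpha_0=\alpha_1-\alpha_{\min}$, not $\alpha_1-0$. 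A $U_1$-spaced grid on all of $[0,\alpha_1]$ contains $\alpha_1/U_1=(1-\rho)\bar n/\rho$ points, not $\Landau{<=}{\bar n}$, and this would spoil the claimed $\Landau{<=}{n_C\,\bar n\,\card A}$ bound. The paper avoids this by letting the bands partition only $[\alpha_{\min},\alpha_k]$, so that the first band really is $[\alpha_0,\alpha_1]=[\alpha_{\min},\alpha_1]$, and by keeping $0$ as a single additional admissible position. This is legitimate because $\alpha_{\min}$ lower-bounds the size of every compressible item, so after the very first insertion the recorded position is already at least $\alpha_{\min}$; the only pair with $s<\alpha_{\min}$ the list ever contains is the initial $(0,0)$. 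With that fix your argument coincides with the paper's.
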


\begin{proof}
  Partition the interval~$[\alpha_{\min}, \alpha_k]$ into intervals~$I^{(1)}, \dotsc, I^{(k)}$.
  For this, define~$I^{(i)} = [\alpha_{i-1}, \alpha_i)$ for each $i \in \setrange{1}{k}$.
  Further partition each interval~$I^{(i)}$
  into subintervals, similar as in \cref{sec:dp-knapsack}:
  for $i \in \setrange{1}{k}$ set~$U_i = \frac{\rho}{(1-\rho) \bar{n}} \alpha_i$,
  and let~$I^{(i)}_{\ell} = [\ell U_i, (\ell + 1) U_i) \cap I^{(i)}$
  for~$\ell \in \setrange{\ell^{(i)}_{\min}}{\ell^{(i)}_{\max}}$.
  We choose~$\ell^{(i)}_{\min} = \floor{\frac{\alpha_{i-1}}{U_i}}$
  and~$\ell^{(i)}_{\max} = \floor{\frac{\alpha_i}{U_i}}$.
  Then $I^{(i)}_{\ell^{(i)}_{\min}}$ and~$I^{(i)}_{\ell^{(i)}_{\max}}$
  are the subintervals that start respectively end at $\alpha_{i-1}$ and $\alpha_i$.
  See \cref{fig:adaptive-rounding} for a visualization of the interval structure.
  \begin{figure*}
    \centering
    \begin{tikzpicture}[x=1.2cm,y=0.75cm]
      \newcommand{\mymark}[3][]{
        \draw ($(#2,#3+0.1)$) -- ($(#2,#3-0.1)$);
        \notblank{#1}{
          \node at ($(#2,#3+0.4)$) {#1};
        }{}
      }

      \renewcommand{\k}{3} 

      \draw (0,0) -- (10,0);
      \mymark[$0$]{0}{0}

      \pgfmathsetmacro{\gy}{-\k-1}
      \draw (0,\gy) -- (10,\gy);
      
      \foreach \i in {0,...,\k} {
        \pgfmathsetmacro{\a}{4.3*1.3^\i}
        \pgfmathsetmacro{\U}{\a/8.6}
        \pgfmathsetmacro{\y}{-\i}
        
        \mymark[$\alpha_\i$]{\a}{0}
        \draw[dotted] (\a,0.23) -- (\a,\gy);
        
        \ifnumequal{\i}{0}{
        }{
          \pgfmathsetmacro{\leftbound}{\a/1.3}
          \pgfmathsetmacro{\x}{\a-0.015}
          \draw[-{Parenthesis[blue,line width=0.5pt,width=1.5ex,length=0.7ex]},draw=blue!20,line width=0.9ex] (\leftbound,0) -- node[midway,above,blue] {$I^{(\i)}$} (\x,0);
          \draw[-{Bracket[blue,line width=0.5pt,width=1.2ex,length=0.3ex]}] (\a,0) -- (\leftbound,0);
        }

        \ifnumequal{\i}{0}{
          \mymark{\a}{\gy}
        }{
          \pgfmathsetmacro{\lmin}{floor(\a/1.3/\U)}
          \pgfmathsetmacro{\lmax}{floor(\a/\U)}

          \draw[red!30,line width=0.9ex] (\leftbound,\y) -- (\a,\y);
          \pgfmathsetmacro{\x}{(\lmax+1)*\U}
          \draw (0,\y) -- (\x,\y);

          \pgfmathsetmacro{\num}{\lmax+1}
          \foreach \j in {0,...,\num} {
            \pgfmathsetmacro{\x}{\j*\U}
            \mymark{\x}{\y}
          }
          
          \ifnumequal{\i}{\k}{
            \foreach \j in {\lmin,...,\lmax} {
              \pgfmathsetmacro{\j}{int(\j)}
              \pgfmathsetmacro{\leftbound}{max(\j*\U, \a/1.3)}
              \pgfmathsetmacro{\rightbound}{min((\j+1)*\U, \a)}
              \draw[-{Bracket[blue,line width=0.5pt,width=1.2ex,length=0.3ex]},draw=blue!10,line width=0.0ex] (\rightbound,\y) -- node[midway,above,blue] {$I^{(\i)}_{\j}$} (\leftbound,\y);
              \draw[-{Parenthesis[blue,line width=0.5pt,width=1.5ex,length=0.7ex]}] (\leftbound,\y) -- (\rightbound,\y);
            }
          }{}          
          
          \pgfmathsetmacro{\y}{\y - 0.2}
          \draw[<->,>=latex] (0,\y) -- (\U,\y)
            node[midway,below]{\small $U_\i$};

          \pgfmathsetmacro{\num}{\lmin+1}
          \foreach \j in {\num,...,\lmax} {
            \pgfmathsetmacro{\x}{\j*\U}
            \mymark{\x}{\gy}
          }
          \mymark{\a}{\gy}

          \pgfmathsetmacro{\x}{\num*\U}
          \pgfmathsetmacro{\xx}{\x + \U}
          \pgfmathsetmacro{\y}{\gy - 0.2}
          \draw[<->,>=latex] (\x,\y) -- (\xx,\y)
            node[midway,below]{\small $U_\i$};
        }
      }
    \end{tikzpicture}
    \caption{Example of the interval structure used for the normalization.
    Here, $\ell^{(i)}_{\min} = 6$ and $\ell^{(i)}_{\max} = 8$.
    The top line shows the partitioning into intervals.
    On the intermediate lines the partitioning from \cref{sec:dp-knapsack}
    for a maximum capacity of $\alpha_i$ is shown for each $i$.
    The highlighted parts are merged for the final partition,
    displayed on the bottom line.}
    \label{fig:adaptive-rounding}
  \end{figure*}
  We again use a dynamic program to calculate the list~$L$ of pairs~$(p, s)$,
  and normalize the widths such that $s \in I^{(i)}_{\ell}$
  implies~$s = \min I^{(i)}_{\ell} = \max(\ell U_i, \alpha_{i-1})$.
  Then we can, for each $i \in \setrange{1}{k}$, compute the profit of a solution
  to~$(I^c, I^c, \alpha_i, \rho)$ as~$\max \setst{p}{\text{$(p,s) \in L$ and $s \leq \alpha_i$}}$:
  Since $U_1 \leq \dots \leq U_k$, the subintervals in $[0, \alpha_i]$ have width at most~$U_i$.
  Therefore, the width of a solution for a capacity in this range
  is underestimated by at most~$\bar{n} U_i$,
  which the compression compensates for, see also \cref{eq:knapsack-error}.
  
  To determine the running time, we need to count the number of subintervals.
  For $i \in \setrange{1}{k}$,
  the interval~$I^{(i)}$ has $\ell^{(i)}_{\max} - \ell^{(i)}_{\min} + 1$ many subintervals.
  Therefore,
  \begin{equation}
    \begin{aligned}
      \ell^{(i)}_{\max} - \ell^{(i)}_{\min}
      &= \floor[a]{\frac{\alpha_i}{U_i}} - \floor[a]{\frac{\alpha_{i-1}}{U_i}} \\
       &\leq \frac{\alpha_i}{U_i} - \parens[a]{\frac{\alpha_{i-1}}{U_i} - 1} \\
       &\leq \frac{(\alpha_i - \alpha_{i-1}) (1 - \rho)}{\rho \alpha_i} \bar{n} + 1 \\
       &\leq \frac{\rho \alpha_i (1 - \rho)}{\rho \alpha_i} \bar{n} + 1
         = (1 - \rho) \bar{n} + 1 \\
       &= \Landau{<=}{\bar{n}} \dpunct{.}
    \end{aligned}
  \end{equation}
  In total, we have $\Landau{<=}{\bar{n} \card{A}}$~many intervals.
  The dynamic program therefore has running time~$\Landau{<=}{n_C \bar{n} \card{A}}$.
\end{proof}

\subsubsection{Reducing the Number of Knapsack Problems}
\label{sec:reducing-kp-problems}

We propose to approximate the space~$\alpha$ available for compressible items
with a value~$\tilde{\alpha} \geq \alpha$,
and use half of the compressibility for this.
Assume that $\alpha > 0$,
the case $\alpha = 0$ will be handled separately.
The uncompressed items in a solution of~$(I^c, I^c, \tilde{\alpha}, \rho)$
have size at most $\frac{\tilde{\alpha}}{1-\rho}$.
Compressed with factor $\rho' = 2\rho - \rho^2$,
they have a size of at most~$\frac{1-2\rho + \rho^2}{1-\rho} \tilde{\alpha} = (1-\rho) \tilde{\alpha}$.
We therefore require $\tilde{\alpha}$ to satisfy
\begin{equation}
  \label{eq:rounded-alpha}
  \alpha \leq \tilde{\alpha} \leq \frac{1}{1-\rho} \alpha \dpunct{,}
\end{equation}
and will construct a set of such values by using a geometric progression.

\begin{definition}
  For any positive reals $L$, $U$, and $x > 1$
  we define
  \begin{equation}
    \textstyle \mathop{\mathrm{geom}}(L, U, x) = \setst{Lx^i}{i \in \range{0}{\ceil{\log_{x}\frac{U}{L}}}} \dpunct{.}
  \end{equation}
\end{definition}

\begin{lemma}
  \label{lemma:geometric-rounding}
  For any $L \leq U$, and $1 < x < 2$ we have $\card{\mathop{\mathrm{geom}}(L, U, x)} = \Landau{<=}{\frac{1}{x-1} \log \frac{U}{L}}$.
\end{lemma}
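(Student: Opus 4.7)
The plan is to bound the cardinality directly from the definition. By definition, $\mathop{\mathrm{geom}}(L, U, x)$ is indexed by $i \in \range{0}{\ceil{\log_x(U/L)}}$, so its cardinality is exactly $\ceil{\log_x(U/L)} + 1$. Thus it suffices to show $\log_x(U/L) = \Landau{<=}{\frac{1}{x-1} \log(U/L)}$, i.e.\ to show that $\frac{1}{\log x} = \Landau{<=}{\frac{1}{x-1}}$ uniformly for $x \in (1,2)$.

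The key estimate I would use is the elementary inequality $\ln x \geq 1 - \frac{1}{x} = \frac{x-1}{x}$ for $x \geq 1$, which follows from writing $\ln x = \int_1^x \tfrac{1}{t}\dif t \geq \int_1^x \tfrac{1}{x}\dif t$ (alternatively, from concavity of $\ln$ at the point $1$). For $x \in (1,2)$ this yields $\ln x \geq \frac{x-1}{2}$, and therefore $\frac{1}{\ln x} \leq \frac{2}{x-1}$.

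Changing the base of the logarithm via $\log_x(U/L) = \frac{\ln(U/L)}{\ln x}$ and combining with the previous bound gives $\log_x(U/L) \leq \frac{2\ln(U/L)}{x-1}$. Adding the $+1$ from the ceiling and the extra term in the index range only changes the implicit $\Landau{<=}{\cdot}$-constant, so the claim follows.

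There is no real obstacle here; the only subtlety is justifying the $\ln x \geq (x-1)/2$ step cleanly for the full range $x \in (1,2)$, but this is standard. I would keep the proof to a few lines along the above chain of inequalities.
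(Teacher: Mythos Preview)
Your proof is correct and follows essentially the same route as the paper: both reduce the bound on $\lceil \log_x(U/L)\rceil$ to the inequality $\ln x \geq \frac{x-1}{x} \geq \frac{x-1}{2}$ for $x \in (1,2)$ (the paper writes it as $\log(1+y) \geq \frac{y}{1+y}$ with $y=x-1$ and verifies it via derivatives, you via the integral). Your presentation is a bit more direct, but the substance is the same.
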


\begin{proof}
  Define $y = x - 1 > 0$ and
  \begin{equation}
    i_{\max} = \floor[a]{2 \frac{\log \frac{U}{L}}{y}} + 1 \dpunct{.}
  \end{equation}
  We claim that $L x^{i_{\max}} > U$, or equivalently, $(1+y)^{i_{\max}} > \frac{U}{L}$.
  
  Since $x < 2$ implies $y < 1$, we have
  \begin{equation}
    i_{\max} > 2 \frac{\log \frac{U}{L}}{y} \geq \frac{\log \frac{U}{L}}{\frac{y}{1+y}} \dpunct{.}
  \end{equation}
  Now consider that $\frac{y}{1 + y} \leq \log (1+y)$ for $y \geq 0$:
  It holds for $y = 0$,
  and we have
  \begin{equation}
    \od{}{y} \frac{y}{1 + y} = \frac{1}{(1+y)^2} \leq \frac{1}{1 + y} = \od{}{y} \log (1+y)
  \end{equation}    
  for $y \geq 0$.
  Therefore
  \begin{equation}
    i_{\max} > \frac{\log \frac{U}{L}}{\frac{y}{1+y}}
    \geq \frac{\log \frac{U}{L}}{\log (1 + y)}
    = \log_{1+y} \frac{U}{L} \dpunct{,}
  \end{equation}
  which proves the claim.
  Hence $\mathop{\mathrm{geom}}(L, U, x)$ has cardinality at most $i_{\max} = \Landau{<=}{\frac{1}{y} \log \frac{U}{L}}$.
\end{proof}

Recall that $\alpha_{\min}$ is a lower bound on any $\alpha > 0$.
Let $A = \mathop{\mathrm{geom}}(\alpha_{\min} \frac{1}{1-\rho}, C, \frac{1}{1-\rho})$,
then $A$ contains an~$\tilde{\alpha}$ with property~\eqref{eq:rounded-alpha}
for each $\alpha \in \setrange{\alpha_{\min}}{C}$.
Furthermore, Lemma~\ref{lemma:geometric-rounding} yields $\card{A} = \Landau{<=}{\frac{1}{\rho} \log \frac{C}{\alpha_{\min}}}$.

Putting everything together, we obtain \cref{alg:knapsack-compressible},
a fully polynomial algorithm for the knapsack problem with compressible items.

\begin{algorithm}
  \Input{$I, I^c, C, \rho, \alpha_{\min}, \beta_{\max}, \bar{n}$}
  $\alpha_{\min} \leftarrow \max(\alpha_{\min}, C - \beta_{\max})$ \label{line:alpha_min}\;
  $A \leftarrow \mathop{\mathrm{geom}}(\alpha_{\min} \frac{1}{1-\rho}, C, \frac{1}{1-\rho})$,
    $A_0 \leftarrow A \cup \set{0}$ \;
  \lFor{$\tilde{\alpha} \in A$}{
    $\mathop{\beta}(\tilde{\alpha}) \leftarrow C - (1 - \rho) \tilde{\alpha}$
  }
  $\mathop{\beta}(0) \leftarrow \beta_{\max}$,
    $B \leftarrow \setst{\mathop{\beta}(\tilde{\alpha})}{\tilde{\alpha} \in A_0}$ \;
  Solve $(I \setminus I^c, \emptyset, \beta, 0)$ for $\beta \in B$ \label{line:solve-incompressible}\;
  Solve $(I^c, I^c, \tilde{\alpha}, \rho)$ for $\tilde{\alpha} \in A_0$,
    solution for $\tilde{\alpha} = 0$ is $\emptyset$ \label{line:solve-compressible}\;
  \For{$\tilde{\alpha} \in A_0$}{
    Combine solutions for $(I \setminus I^c, \emptyset, \mathop{\beta}(\tilde{\alpha}), 0)$
      and $(I^c, I^c, \tilde{\alpha}, \rho)$ \label{line:combine-solutions}\;
  }
  \Return{the best combined solution}
  
  \caption{Solve knapsack with compressible items}
  \label{alg:knapsack-compressible}
\end{algorithm}

\begin{theorem}
  \label{thm:knapsack-compressible}
  \Cref{alg:knapsack-compressible} finds a solution to the instance $(I, I^c, C, \rho')$
  with profit at least $\KPCOPT{I, \emptyset, C, 0}$
  in time $\Landau{<=}{n_I \beta_{\max} + n_C \bar{n} \frac{1}{\rho'} \log \frac{C}{\alpha_{\min}}}$.
\end{theorem}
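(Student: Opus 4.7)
The plan has three components: verify that the produced solution is feasible for $(I, I^c, C, \rho')$ under the outer compression factor $\rho' = 2\rho - \rho^2$, verify its profit is at least $\KPCOPT{I, \emptyset, C, 0}$, and establish the running time. I will use Lemma~\ref{lemma:knapsack-separate} to split between compressible and incompressible items, Lemma~\ref{thm:adaptive-rounding} for the compressible subproblems, and the one-pass technique of \cref{sec:one-pass} for the incompressible ones.

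For the profit bound, I would fix an optimal solution $I^*$ of $(I, \emptyset, C, 0)$ and set $\alpha^* = \sum_{i \in I^* \cap I^c} \s{i}$ together with $\beta^* = \sum_{i \in I^* \setminus I^c} \s{i}$. If $\alpha^* = 0$, the iteration $\tilde{\alpha} = 0$ already recovers the desired profit, because $\mathop{\beta}(0) = \beta_{\max} \geq \beta^*$ and line~\ref{line:solve-incompressible} will return a solution of profit at least $\sum_{i \in I^* \setminus I^c} \pr{i}$. Otherwise, the construction of $A$ as a geometric progression with ratio $1/(1-\rho)$ guarantees an element $\tilde{\alpha} \in A$ satisfying $\alpha^* \leq \tilde{\alpha} \leq \alpha^*/(1-\rho)$. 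By Lemma~\ref{thm:adaptive-rounding}, line~\ref{line:solve-compressible} then yields a compressible set with profit at least $\KPCOPT{I^c, \emptyset, \tilde{\alpha}, 0} \geq \sum_{i \in I^* \cap I^c} \pr{i}$, and I would check that $\mathop{\beta}(\tilde{\alpha}) = C - (1-\rho)\tilde{\alpha} \geq C - \alpha^* \geq \beta^*$, so that the corresponding incompressible solve covers $\sum_{i \in I^* \setminus I^c} \pr{i}$. Summing gives the claim.

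Feasibility is the step where the interplay between the inner compression factor $\rho$ and the outer factor $\rho'$ must line up, and I expect this to be the main subtlety. The compressible set returned by line~\ref{line:solve-compressible} is feasible for $(I^c, I^c, \tilde{\alpha}, \rho)$, and therefore has uncompressed size at most $\tilde{\alpha}/(1-\rho)$. Using the identity $1-\rho' = (1-\rho)^2$, applying the outer compression with factor $\rho'$ shrinks this further to at most $(1-\rho)\tilde{\alpha}$. Together with the incompressible portion, which is bounded by $\mathop{\beta}(\tilde{\alpha}) = C - (1-\rho)\tilde{\alpha}$, the total outer-compressed size is at most $C$. The nontrivial observation is that one must reserve $C - (1-\rho)\tilde{\alpha}$ (and not $C - \tilde{\alpha}$) for the incompressible side; this is exactly what the algorithm does.

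For the running time I would first check that the set $A$ satisfies the hypothesis~\eqref{eq:gap-alpha} of Lemma~\ref{thm:adaptive-rounding}: consecutive elements differ by the factor $1/(1-\rho)$, so a direct computation gives $\alpha_i - \alpha_{i-1} = \rho \alpha_i$. Combined with Lemma~\ref{lemma:geometric-rounding} this yields $\card{A} = \Landau{<=}{(1/\rho) \log(C/\alpha_{\min})}$. The contributions of the algorithm are then $\Landau{<=}{n_I \beta_{\max}}$ for line~\ref{line:solve-incompressible} via the one-pass technique, $\Landau{<=}{n_C \bar{n} \card{A}}$ for line~\ref{line:solve-compressible} via Lemma~\ref{thm:adaptive-rounding}, and $\Landau{<=}{\card{A}}$ for the combination loop. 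Since $\rho = \Theta(\rho')$, these sum to the bound claimed in the theorem.
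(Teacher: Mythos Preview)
Your approach matches the paper's, and the feasibility computation via $1-\rho' = (1-\rho)^2$ is exactly right. There is, however, one place where your profit argument is incomplete: you assert that whenever $\alpha^* > 0$ the set $A$ contains an element with $\alpha^* \leq \tilde\alpha \leq \alpha^*/(1-\rho)$, but $A$ is built from the value of $\alpha_{\min}$ \emph{after} \cref{line:alpha_min} replaces it by $\max(\alpha_{\min}, C - \beta_{\max})$. If that line strictly increases $\alpha_{\min}$, then $0 < \alpha^* < \alpha_{\min}$ is possible (the input $\alpha_{\min}$ was a lower bound on nonzero $\alpha$, but the updated one need not be), and then $\min A = \alpha_{\min}/(1-\rho) > \alpha^*/(1-\rho)$, so no such $\tilde\alpha$ exists. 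The paper handles this as a third case: take $\tilde\alpha = \min A$, so that $\beta(\tilde\alpha) = C - \alpha_{\min}$; since the update forced $\alpha_{\min} = C - \beta_{\max}$, this equals $\beta_{\max} \geq \beta^*$, and the combination still dominates $I^*$.

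The same \cref{line:alpha_min} is also what makes your running-time claim for \cref{line:solve-incompressible} go through. The one-pass cost is $\Landau{<=}{n_I \max B}$, and you need $\max B \leq \beta_{\max}$; this holds because every $\tilde\alpha \in A$ satisfies $\tilde\alpha \geq \alpha_{\min}/(1-\rho)$, whence $\beta(\tilde\alpha) = C - (1-\rho)\tilde\alpha \leq C - \alpha_{\min} \leq \beta_{\max}$, the last inequality being precisely what \cref{line:alpha_min} enforces. Once you account for this line in both places, your proof is complete and coincides with the paper's.
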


\begin{proof}
  First note that the choice of $B$ is reasonable,
  since $\tilde{\alpha} \leq \frac{1}{1-\rho}C$ for $\tilde{\alpha} \in A$,
  implying $\mathop{\beta}(\tilde{\alpha}) = C - (1 - \rho) \tilde{\alpha} \geq 0$.
  We now show that each combined solution is a feasible solution for $(I, I^c, C, \rho')$.
  Let $I' \subseteq I$ be a solution created in \cref{line:combine-solutions} for some fixed $\tilde{\alpha}$.
  Then $\s{I' \cap I^c} \leq \frac{1}{1-\rho} \tilde{\alpha}$
  and $\s{I' \setminus I^c} \leq \mathop{\beta}(\tilde{\alpha}) = C - (1 - \rho) \tilde{\alpha}$.
  Compressing items from $I^c$ with factor $\rho'$ leads to a total size of at most
  \begin{equation}
      (1-2\rho+\rho^2) \frac{1}{1-\rho} \tilde{\alpha} + C - (1-\rho) \tilde{\alpha}
      = C \dpunct{.}
  \end{equation}
  
  For the profit,
  consider an optimal solution $I^*$ of $(I, \emptyset, C, 0)$.
  Now define $\alpha = \sum_{i \in I^* \cap I^c} \s{i}$
  and $\beta = \sum_{i \in I^* \setminus I^c} \s{i}$.
  We claim that there is an $\tilde{\alpha} \in A_0$ such that
  $\tilde{\alpha} \geq \alpha$ and $\mathop{\beta}(\tilde{\alpha}) \geq \beta$.
  Recall that we enforced $\alpha_{\min} \geq C - \beta_{\max}$ in \cref{line:alpha_min},
  but this is not a restriction,
  since there always must be $C - \beta_{\max}$ space available for compressible jobs.
  If $\alpha = 0$, $\tilde{\alpha} = 0$ clearly satisfies the claim.
  If $0 < \alpha < \alpha_{\min}$,
  then $\tilde{\alpha} = \min A = \frac{1}{1-\rho} \alpha_{\min}$ is larger than~$\alpha$,
  and since there is $\alpha_{\min}$ space available for compressible jobs,
  $\mathop{\beta}(\tilde{\alpha}) = C - \alpha_{\min}$ space must suffice for the incompressible jobs.
  Otherwise we have $\alpha \geq \alpha_{\min}$,
  so there is one $\tilde{\alpha} \in A$ which satisfies \cref{eq:rounded-alpha}.
  Furthermore, $\mathop{\beta}(\tilde{\alpha}) = C - (1-\rho) \tilde{\alpha} \geq C - \alpha \geq \beta$. 
  According to \cref{thm:adaptive-rounding},
  the profit of the found solution is at least
  $\KPCOPT{I^c, \emptyset, \tilde{\alpha}, 0} + \KPCOPT{I \setminus I^c, \emptyset, \mathop{\beta}(\tilde{\alpha}), 0}$.
  Lemma~\ref{lemma:knapsack-separate} now proves that the profit is at least $\KPCOPT{I, \emptyset, C, 0}$.
  
  Regarding the running time,
  the definition of~$A$ clearly satisfies \cref{eq:gap-alpha}.
  Therefore, we can apply the methods described in \cref{sec:one-pass}
  for \cref{line:solve-incompressible,line:solve-compressible}.
  Because we ensured $\alpha_{\min} \geq C - \beta_{\max}$
  and $\tilde{\alpha} \geq \frac{1}{1-\rho} \alpha_{\min}$ for each $\tilde{\alpha} \in A$,
  $\mathop{\beta}(\tilde{\alpha}) = C - (1-\rho) \tilde{\alpha} \leq C - \alpha_{\min} \leq \beta_{\max}$.
  Therefore, $\max B \leq \beta_{\max}$
  and \cref{line:solve-incompressible} requires time~$\Landau{<=}{n \beta_{\max}}$.
  \Cref{line:solve-compressible} requires time~$\Landau{<=}{n \bar{n} \frac{1}{\rho} \log \frac{C}{\alpha_{\min}}}$.
  These steps clearly dominate the running time of the algorithm,
  and $\rho = \Landau{=}{\rho'}$.
\end{proof}

Now that we have an efficient means to solve the knapsack problem with compressible items,
we can create a simple fully polynomial algorithm based on \cref{alg:scheduling-simple}.
To use \cref{alg:knapsack-compressible},
we need to provide good bounds on~$\alpha_{\min}$, $\beta_{\max}$ and~$\bar{n}$.
Therefore, we only use \cref{alg:scheduling-simple} if $m < 16 n$.
Otherwise, if $m \geq 16 n$, we can use our FPTAS with $\epsilon = \frac{1}{2}$
to obtain a schedule with makespan~$\frac{3}{2} d$.
Then choose $\alpha_{\min} = \frac{1}{\rho}$, $\beta_{\max} = m = \Landau{<=}{n}$,
and $\bar{n} = m \rho = \Landau{<=}{\epsilon n}$.
Solving the knapsack problem thus requires $\Landau{<=}{n^2 \log \epsilon m}$~operations.
Using the PTAS or precomputing the required values of $\gamma_j$ requires $\Landau{<=}{n \log m}$~time.
Finding the schedule in \cref{line:final-schedule} requires $\Landau{<=}{n \log n}$~time
using Lemma~\ref{lemma:make-feasible}.
The dual algorithm has running time~$\Landau{<=}{n (\log m + n \log \epsilon m)}$ in total.

\subsection{The Improved Algorithm}
\label{sec:bounded-kp}

In the bounded knapsack problem, the input defines $k$ item types with sizes and profits,
and an item count $c_t$ for each item type~$t \in \natupto{k}$.
The number of items can be much larger than the number of item types.
An instance of the bounded knapsack instance with $k$ item types and capacity~$C$
can be transformed into an instance of the regular knapsack problem
with $\Landau{<=}{\log m}$ items per type~\cite{kellerer04}.
Each of these items serves as a container for an integer number
of items of the same type.

To further speed up the solution of the knapsack problem $(J_B(d), J^C, m, \rho)$,
we transform it into a bounded knapsack problem.
Beforehand, we reduce the number of item types by rounding the jobs.
But first, we define the threshold~$b$ for compressible jobs.
We also introduce an accuracy parameter~$\delta$, which we later choose depending on~$\epsilon$.

\begin{lemma}
  \label{lemma:compress-twice}
  Let $\delta \in (0,1]$, $\rho = \frac{1}{4}(\sqrt{1+\delta}-1)$, and $b = \frac{1}{2\rho - \rho^2}$.
  Any job that uses at least~$b$ processors can be compressed with factor~$2\rho - \rho^2$,
  decreasing its processor count by a factor $(1-\rho)^2$
  and increasing its processing time by a factor of less than~$1 + \delta$.
  Furthermore we have $\rho = \Landau{=}{\delta}$ and $b = \Landau{=}{\frac{1}{\delta}}$.
\end{lemma}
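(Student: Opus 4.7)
The plan is to invoke Lemma~\ref{lemma:compression} a single time, with compression factor $\tilde\rho := 2\rho - \rho^2$. The identity $1 - \tilde\rho = (1-\rho)^2$ is what makes this one application produce exactly the effect described in the statement, and it is the reason for the ``compress twice'' terminology even though no iterated application is actually carried out. It is worth noting up front that applying Lemma~\ref{lemma:compression} literally twice with factor~$\rho$ is not an option at this threshold: since $b = 1/\tilde\rho = 1/(\rho(2-\rho))$ is strictly smaller than $1/(\rho(1-\rho))$, the prerequisite $b' \geq 1/\rho$ for the second application would already fail after the first step. This is the only subtle point of the proof.

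The first step is to verify the two prerequisites of Lemma~\ref{lemma:compression} for the factor~$\tilde\rho$. The condition $b \geq 1/\tilde\rho$ holds with equality by the very choice of~$b$. The condition $\tilde\rho \in (0, 1/4]$ follows from $\delta \leq 1$: it gives $\rho \leq (\sqrt{2}-1)/4$, whence $\tilde\rho \leq 2\rho \leq (\sqrt{2}-1)/2 < 1/4$. The conclusion of Lemma~\ref{lemma:compression} then yields a processor count of $\floor{b(1-\tilde\rho)} = \floor{b(1-\rho)^2}$, i.e.~a factor $(1-\rho)^2$ in the processor count, and a processing time of at most $(1+4\tilde\rho)\ptm{j}{b}$.

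The second step is the algebraic check $1 + 4\tilde\rho \leq 1 + \delta$. Squaring $4\rho + 1 = \sqrt{1+\delta}$ yields $\delta = 16\rho^2 + 8\rho$, while $4\tilde\rho = 8\rho - 4\rho^2$, so $\delta - 4\tilde\rho = 20\rho^2 \geq 0$, which is the required bound.

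For the asymptotic statements, the same identity $\delta = 16\rho^2 + 8\rho$ combined with $\rho \leq (\sqrt{2}-1)/4 < 1$ gives $8\rho \leq \delta \leq 24\rho$, so $\rho = \Landau{=}{\delta}$. Then $b = 1/(\rho(2-\rho))$ is squeezed between $1/(2\rho)$ and $1/\rho$ because $\rho \in (0,1]$, yielding $b = \Landau{=}{1/\rho} = \Landau{=}{1/\delta}$. All of these steps are short computations; the only piece of insight is the identification of the correct compound compression factor~$\tilde\rho$ for one invocation of Lemma~\ref{lemma:compression}.
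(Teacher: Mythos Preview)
Your proof is correct and follows essentially the same approach as the paper: a single application of Lemma~\ref{lemma:compression} with the compound factor $\tilde\rho = 2\rho - \rho^2$, followed by the algebraic verification that $1+4\tilde\rho < 1+\delta$ via $(1+4\rho)^2 = 1+\delta$, and finally the asymptotic estimates. Your constants differ slightly (e.g.\ you obtain $8\rho \leq \delta \leq 24\rho$ where the paper gets $\delta/12 \leq \rho \leq \delta/4$), and your discussion of why two separate applications of Lemma~\ref{lemma:compression} would fail is an addition not present in the paper, but the core argument is identical.
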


\begin{proof}
  Since $\delta \leq \frac{5}{4}$ we have $2\rho \leq \frac{1}{4}$, so $2\rho - \rho^2$ is a valid compression factor.
  Compression reduces the processor count by a factor~$(1 - 2\rho + \rho^2 = (1-\rho)^2$.
  The processing time increases by a factor of
  \begin{equation}
    1 + 4(2\rho - \rho^2) = 1 + 8\rho - 4\rho^2 < (1 + 4\rho)^2 = 1 + \delta \dpunct{.}
  \end{equation}    
  
  The identity $(1+4\rho)^2 = 1 + \delta$ implies
  $\rho = \frac{\delta}{8(1+2\rho)} \geq \frac{\delta}{12}$,
  since $\rho \leq \frac{1}{4}$.  
  Also, since $\sqrt{1+\delta} \leq 1+\delta$ we have $\rho \leq \frac{1}{4}\delta$,
  hence $\rho = \Landau{=}{\delta}$.

  Finally, $b = \frac{1}{2\rho - \rho^2} = \Landau{=}{\frac{1}{\rho}} = \Landau{=}{\frac{1}{\delta}}$,
  because $0 < \rho^2 \leq \frac{1}{4}\rho$,
  so $\frac{7}{4} \rho \leq 2\rho - \rho^2 < 2 \rho$.
\end{proof}

To define rounded sizes and profits for the items (jobs in $J_B(d)$),
we introduce a notation to round values geometrically;
$\mathop{\mathrm{\check{gr}}}(a, L, U, x) = \max \setst{a' \in \mathop{\mathrm{geom}}(L, U, x)}{a' \leq a}$
for rounding down, and $\mathop{\mathrm{\hat{gr}}}$ analogously for rounding up.
For $s \in \set{\frac{d}{2}, d}$, round the processor counts down
\begin{equation}
  \check{\gamma}_j(s) =
    \begin{cases*}
      \procnum{j}{s} & if $\procnum{j}{s} \leq b$ \\
      \mathop{\mathrm{\check{gr}}}(\procnum{j}{s}, b, m, 1 + \rho) & otherwise
    \end{cases*}
\end{equation}
and let $\s{j} = \check{\gamma}_j(d)$.
If $\check{\gamma}_j(\frac{d}{2}) < b$, round the original profit~$v_j(d)$ to
\begin{equation}
  \pr{j} =
    \begin{cases*}
      0 & if $v_j(d) < \frac{\delta}{2}d$ \\
      \mathop{\mathrm{\hat{gr}}}(v_j(d), \frac{\delta}{2}d, \frac{b}{2}d, 1 + \frac{\delta}{b}) & otherwise.
    \end{cases*}
\end{equation}
Otherwise, when $\check{\gamma}_j(\frac{d}{2}) \geq b$,
we consider rounded processing times for $s \in \set{\frac{d}{2}, d}$,
$\check{t}_j(s) = \mathop{\mathrm{\check{gr}}}(\ptm{j}{\procnum{j}{s}}, \frac{s}{2}, s, 1 + 4\rho)$
and set the profit as the saved work according to the rounded values,
i.e.~$\pr{j} = \check{t}_j(\frac{d}{2}) \check{\gamma}_j(\frac{d}{2}) - \check{t}_j(d) \check{\gamma}_j(d)$.

\subsubsection{Bounding the Number of Item Types}

In the first step, we round the processor counts $\procnum{j}{d}$ and $\procnum{j}{\frac{d}{2}}$.
Counts larger than $b$ are rounded down to the next value from $\mathop{\mathrm{geom}}(b, m, 1 + \rho)$.
According to Lemma~\ref{lemma:geometric-rounding},
this leaves us with $\Landau{<=}{\frac{\log (\delta m)}{\delta}}$~many different counts.

First, consider the types of compressible jobs,
i.e.~jobs that have processor count~$\procnum{j}{d} \geq b$.
\begin{lemma}
  \label{lemma:heights-bound}
  If we round the processing times $\ptm{j}{\procnum{j}{s}}$, where $s = d$ or $s = \frac{d}{2}$,
  down to the next value in $\mathop{\mathrm{geom}}(\frac{s}{2}, s, 1 + 4\rho)$  
  then there are at most $\Landau{<=}{\frac{1}{\delta}}$ many different rounded processing times (heights).
\end{lemma}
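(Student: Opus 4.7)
The plan is to apply Lemma~\ref{lemma:geometric-rounding} directly to the geometric rounding set $\mathop{\mathrm{geom}}(\frac{s}{2}, s, 1+4\rho)$ and then use Lemma~\ref{lemma:compress-twice} to translate the bound in terms of $\rho$ into a bound in terms of $\delta$.

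First, I would observe that by Lemma~\ref{lemma:geometric-rounding} with $L = \frac{s}{2}$, $U = s$, and $x = 1 + 4\rho$, the set $\mathop{\mathrm{geom}}(\frac{s}{2}, s, 1+4\rho)$ has cardinality $\Landau{<=}{\frac{1}{4\rho} \log 2} = \Landau{<=}{\frac{1}{\rho}}$. Applying Lemma~\ref{lemma:compress-twice}, which gives $\rho = \Landau{=}{\delta}$, this bound becomes $\Landau{<=}{\frac{1}{\delta}}$. Since every rounded height is by definition an element of this set, this would already give the claimed bound on the number of distinct heights.

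The one subtlety is to confirm that the geometric rounding is actually well-defined for the processing times we care about, i.e.\ that $\ptm{j}{\procnum{j}{s}}$ lies in the range $(\frac{s}{2}, s]$ where the set $\mathop{\mathrm{geom}}(\frac{s}{2}, s, 1+4\rho)$ is meaningful. The upper bound $\ptm{j}{\procnum{j}{s}} \leq s$ follows directly from the definition of $\procnum{j}{s}$. For the lower bound I would use monotony of the work function: since $j \in J_B(d)$ is a big job we have $\ptm{j}{1} > \frac{d}{2} \geq \frac{s}{2}$, so $\procnum{j}{s} \geq 2$, and the defining property of $\procnum{j}{s}$ gives $\ptm{j}{\procnum{j}{s}-1} > s$. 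Combining with $\work{j}{\procnum{j}{s}} \geq \work{j}{\procnum{j}{s}-1}$ yields
\begin{equation}
  \ptm{j}{\procnum{j}{s}} \geq \frac{\procnum{j}{s}-1}{\procnum{j}{s}} \ptm{j}{\procnum{j}{s}-1} > \frac{\procnum{j}{s}-1}{\procnum{j}{s}} s \geq \frac{s}{2},
\end{equation}
using $\procnum{j}{s} \geq 2$ in the final inequality. This certifies that rounding down into $\mathop{\mathrm{geom}}(\frac{s}{2}, s, 1+4\rho)$ is meaningful and produces a value from that set.

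I do not expect any real obstacle here. The only small thing to get right is the lower bound on $\ptm{j}{\procnum{j}{s}}$, which is the standard monotony-plus-minimality argument sketched above; the rest is a one-line application of the earlier lemma combined with the asymptotic identity $\rho = \Landau{=}{\delta}$ from Lemma~\ref{lemma:compress-twice}.
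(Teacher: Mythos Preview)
Your approach is essentially the paper's: both arguments first establish that $\ptm{j}{\procnum{j}{s}} > \frac{s}{2}$ via monotony of the work function together with minimality of $\procnum{j}{s}$, and then count the elements of the geometric rounding set. You invoke Lemma~\ref{lemma:geometric-rounding} for the count, whereas the paper uses the equivalent direct observation that consecutive elements of $\mathop{\mathrm{geom}}(\frac{s}{2}, s, 1+4\rho)$ differ by at least $\frac{s}{2}\cdot 4\rho$; both yield $\Landau{<=}{\frac{1}{\rho}} = \Landau{<=}{\frac{1}{\delta}}$.

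There is one small slip in your lower-bound verification: from $\ptm{j}{1} > \frac{d}{2} \geq \frac{s}{2}$ you cannot conclude $\procnum{j}{s} \geq 2$. For $s = d$, a big job with $\frac{d}{2} < \ptm{j}{1} \leq d$ has $\procnum{j}{d} = 1$, and then your monotony chain (which needs $\procnum{j}{s} \geq 2$ for the final inequality $\frac{\procnum{j}{s}-1}{\procnum{j}{s}} \geq \frac{1}{2}$) does not apply. The fix is immediate: when $\procnum{j}{s} = 1$ you get $\ptm{j}{\procnum{j}{s}} = \ptm{j}{1} > \frac{d}{2} \geq \frac{s}{2}$ directly, so the monotony argument is only needed in the case $\procnum{j}{s} \geq 2$. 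The paper handles this same case distinction inside its proof by contradiction (if $\procnum{j}{s} = 1$ and $\ptm{j}{\procnum{j}{s}} \leq \frac{s}{2}$, then $j$ would be small).
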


\begin{proof}
  Consider that the height of each job is more than half of the shelf height~$s$:
  Assume job $j$ has $\ptm{j}{\procnum{j}{s}} \leq \frac{s}{2}$ for the sake of contradiction.
  Then $\procnum{j}{s} > 1$, otherwise it would be a small job, which we removed.
  We claim that reducing the processor count by $1$ at most doubles the processing time,
  because we at most halve the processor count.
  More formally, we have
  $\procnum{j}{s} \leq 2 (\procnum{j}{s} - 1)$, because $\procnum{j}{s} \geq 2$, and
  \begin{equation}
    \begin{multlined}
      (\procnum{j}{s} - 1) \ptm{j}{\procnum{j}{s} - 1} = \work{j}{\procnum{j}{s} - 1} \\
      \leq \work{j}{\procnum{j}{s}} = \procnum{j}{s} \ptm{j}{\procnum{j}{s}}
    \end{multlined}
  \end{equation}    
  due to monotony.
  Therefore
  \begin{equation}
    \ptm{j}{\procnum{j}{s} - 1} \leq \frac{\procnum{j}{s}}{\procnum{j}{s} - 1} \ptm{j}{\procnum{j}{s}}
    \leq 2 \ptm{j}{\procnum{j}{s}} \dpunct{.}
  \end{equation}
  It follows that~$\ptm{j}{\procnum{j}{s}-1} \leq s$, a contradiction to the definition of~$\procnum{j}{s}$.
    
  The distance between two rounded processing times then must be
  at least $\frac{s}{2} \times 4\rho$,
  thus there are at most $\frac{1}{4\rho}$ rounded processing times 
  between $\frac{s}{2}$ and $s$.
\end{proof}

In the case of compressible jobs, the rounding step can be simplified:
\begin{lemma}
  \label{lemma:largest-height}
  Compressible jobs must have one of the two largest rounded processing times.
\end{lemma}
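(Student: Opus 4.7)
The plan is to exploit that a compressible job uses so many processors that its processing time cannot fall much below the shelf height~$s$. Concretely, for each $s \in \{d/2, d\}$, I will first derive a lower bound $\ptm{j}{\procnum{j}{s}} > (1-\rho)^2 s$, and then verify the purely numerical inequality $(1-\rho)^2(1+4\rho) \geq 1$ so that this bound already exceeds the second-largest element of $\mathop{\mathrm{geom}}(s/2, s, 1+4\rho)$.

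For the lower bound I would use the minimality in the definition of~$\procnum{j}{s}$, giving $\ptm{j}{\procnum{j}{s}-1} > s$; this is well-defined because $\procnum{j}{s} \geq b > 2$ for every compressible job. Monotony of the work function yields $(\procnum{j}{s}-1)\,\ptm{j}{\procnum{j}{s}-1} \leq \procnum{j}{s}\,\ptm{j}{\procnum{j}{s}}$, exactly the inequality that already appeared in the proof of Lemma~\ref{lemma:heights-bound}. Combining the two gives $\ptm{j}{\procnum{j}{s}} > \frac{\procnum{j}{s}-1}{\procnum{j}{s}}\,s \geq (1 - 1/b)\,s = (1-(2\rho-\rho^2))\,s = (1-\rho)^2 s$, where the middle step uses $\procnum{j}{s} \geq b$ and the last uses $b = \frac{1}{2\rho-\rho^2}$.

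Next, let $g_{K-1} < g_K$ denote the two largest elements of $\mathop{\mathrm{geom}}(s/2, s, 1+4\rho)$. Since $g_K \leq s$, we have $g_{K-1} = g_K/(1+4\rho) \leq s/(1+4\rho)$, so it suffices to show $(1-\rho)^2 \geq 1/(1+4\rho)$, i.e., $(1-\rho)^2(1+4\rho) \geq 1$. Expanding yields $1 + 2\rho - 7\rho^2 + 4\rho^3$; the quadratic $2 - 7\rho + 4\rho^2$ has smaller root $(7-\sqrt{17})/8 > \frac{1}{4}$, so for all $\rho \leq \frac{1}{4}$ the expression is strictly greater than~$1$. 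Hence $\ptm{j}{\procnum{j}{s}} > (1-\rho)^2 s \geq g_{K-1}$, and the rounded-down value $\check{t}_j(s)$ must be one of $g_{K-1}, g_K$, which proves the lemma.

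The only real obstacle is the algebraic inequality $(1-\rho)^2(1+4\rho) \geq 1$, but this holds essentially by design: the threshold $b = \frac{1}{2\rho-\rho^2}$ was chosen precisely so that the lower bound $(1-\rho)^2 s$ extracted from monotony sits above the second-to-last geometric cell of width ratio $1+4\rho$. No new idea beyond this algebraic matching is needed.
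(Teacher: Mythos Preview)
Your core argument is correct and recovers the same threshold as the paper, namely $\ptm{j}{\procnum{j}{s}} > s/(1+4\rho)$. The paper obtains this in one line by invoking Lemma~\ref{lemma:compression}: if $\ptm{j}{\procnum{j}{s}} \leq s/(1+4\rho)$, compressing with factor~$\rho$ would give processing time $\leq s$ on $\floor{(1-\rho)\procnum{j}{s}} < \procnum{j}{s}$ processors, contradicting minimality of $\procnum{j}{s}$. You instead carry out the monotony computation by hand, stepping only from $\procnum{j}{s}$ to $\procnum{j}{s}-1$, which yields the explicit bound $(1-1/b)s = (1-\rho)^2 s$ and then the algebraic check $(1-\rho)^2(1+4\rho)\geq 1$. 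Your route is more self-contained and in fact sidesteps a small hypothesis mismatch in the paper's appeal to Lemma~\ref{lemma:compression} (that lemma requires $\procnum{j}{s}\geq 1/\rho$, whereas here only $\procnum{j}{s}\geq b = 1/(2\rho-\rho^2) < 1/\rho$ is assumed); the paper's route is shorter once the compression lemma is in hand.

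There is one slip in your last paragraph. With the paper's definition $\mathop{\mathrm{geom}}(L,U,x)=\{Lx^i: 0\leq i\leq \lceil\log_x(U/L)\rceil\}$, the top element satisfies $g_K\geq s$, not $g_K\leq s$, so your inequality $g_{K-1}\leq s/(1+4\rho)$ need not hold. The right observation is that $g_{K-1}<s$ forces $g_{K-2}=g_{K-1}/(1+4\rho)<s/(1+4\rho)$; hence your bound $\ptm{j}{\procnum{j}{s}}>s/(1+4\rho)>g_{K-2}$ already places the rounded value in $\{g_{K-2},g_{K-1},g_K\}$. Since also $\ptm{j}{\procnum{j}{s}}\leq s$, this collapses to $\{g_{K-1},g_K\}$ when $g_K=s$, and to $\{g_{K-2},g_{K-1}\}$ when $g_K>s$ (because then $g_K$ is never attained as a rounded value). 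Either way only two rounded values can occur, which is the content of the lemma; the paper phrases this as ``$\mathop{\mathrm{geom}}(s/2,s,1+4\rho)$ contains at most one value in $(\frac{1}{1+4\rho}s, s]$''.
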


\begin{proof}
  Let $s$ be again the shelf height,
  and $j$ be a job that is wide in the shelf, i.e.~$\procnum{j}{s} \geq b$.
  Assume $\ptm{j}{\procnum{j}{s}} \leq \frac{1}{1+4\rho}s$.
  Then $\ptm{j}{\floor{(1-\rho)\procnum{j}{s}}} \leq s$ by Lemma~\ref{lemma:compression},
  contradicting the definition of~$\procnum{j}{s}$.
  Therefore, $\ptm{j}{\procnum{j}{s}} > \frac{1}{1+4\rho}s$.
  Since $\mathop{\mathrm{geom}}(\frac{s}{2}, s, 1 + 4\rho)$
  contains at most one value in the interval $(\frac{1}{1+4\rho}s, s]$,
  $\ptm{j}{\procnum{j}{s}}$ is either rounded down to that value, or the one below.
\end{proof}

Jobs that are wide in shelf~$S_1$ must also be wide in shelf~$S_2$,
so there are $\Landau{<=}{\frac{\log (\delta m)}{\delta}}$ different processor counts and two
processing times each for shelf~$S_1$ and shelf~$S_2$ to consider.
This leaves us with no more than $k_C = \Landau{<=}{\frac{1}{\delta^2} \log^2 (\delta m)}$~types of compressible items.

Now consider the types of incompressible jobs.
These jobs have $b-1 = \Landau{<=}{\frac{1}{\delta}}$ many different processor counts in shelf~$S_1$.
They can be narrow or wide in shelf~$S_2$.
If a job is wide in shelf~$S_2$,
we round the processing times according to Lemma~\ref{lemma:heights-bound}.
Then there are $\Landau{<=}{\frac{1}{\delta}}$ many different processing times in shelf~$S_1$,
and $\Landau{<=}{\frac{\log (\delta m)}{\delta}}$ different processor counts, but only two processing times possible in shelf~$S_2$.
Therefore, there are
at most~$k = \Landau{<=}{\frac{1}{\delta^3} \log (\delta m)}$~item types for such jobs.

If a job is narrow in shelf~$S_2$, we directly rounded the profit down to $0$ or
up to the next value in $\mathop{\mathrm{geom}}(\frac{\delta}{2}d, \frac{b}{2}d, 1 + \frac{\delta}{b})$.
This leaves us with $\Landau{<=}{\frac{1}{\delta^2} \log \frac{1}{\delta}}$~many different profits,
or $\Landau{<=}{\frac{1}{\delta^3} \log \frac{1}{\delta}}$~item types.

In total, there are at most $k_I = \Landau{<=}{\frac{1}{\delta^3} (\log \frac{1}{\delta} + \log (\delta m)} = \Landau{<=}{\frac{1}{\delta^3} \log m}$~types of incompressible items.

\subsubsection{Putting it Together}

Our algorithm, \cref{alg:scheduling-improved} is very similar to \cref{alg:scheduling-simple}.
The main difference is how we solve the knapsack instance in \cref{line:bounded-knapsack}.
\begin{algorithm}
  \Input{$J, m, d, \epsilon$}
  $\delta \leftarrow \frac{1}{5} \epsilon$, $\rho \leftarrow \frac{1}{4}(\sqrt{1+\delta} - 1)$,
    $b \leftarrow \frac{1}{2\rho - \rho^2}$, $d' \leftarrow (1+\delta)^2 d$,
    $J^C \leftarrow \setst{j \in J_B(d)}{\procnum{j}{d} \geq \frac{1}{\rho}}$ \;
  \For{$j \in J_B(d)$}{
    Precompute $\procnum{j}{\frac{d}{2}}, \procnum{j}{d}$,
      $\procnum[2]{j}{\frac{d'}{2}}$, $\procnum{j}{d'}$, $\procnum{j}{\frac{3}{2}d'}$,
      $\s{j}$, $\pr{j}$ \;
  }
  Find solution $J'$ to $(J_B(d), J^C, m, \rho)$ via bounded knapsack \label{line:bounded-knapsack}\;
  Apply Lemma~\ref{lemma:make-feasible} to $J' \cap J_B(d')$,
    obtain schedule for jobs $J$ with makespan $\frac{3}{2}d'$ \;
  \lIf{schedule is infeasible}{
    \KwSty{reject} $d$}
  \lElse{
    \Return{the schedule}}
  
  \caption{Scheduling of monotone moldable jobs using bounded knapsack with compressible items}
  \label{alg:scheduling-improved}
\end{algorithm}
We interpret it as an instance of the bounded knapsack problem,
transform it into a regular knapsack instance $(I, I_C, m, \rho)$
with $\Landau{<=}{k_C \log m}$~compressible and $\Landau{<=}{k_I \log m}$~incompressible items,
and solve it using \cref{alg:knapsack-compressible}.
We proceed to replace the container items in the solution
with the appropriate number of items (jobs) from the corresponding type.
Note that we used only part of the compressibility
for the solution of the knapsack problem in \cref{line:bounded-knapsack}.
The rest is required to compensate for the rounding of the processor counts.

\begin{lemma}
  \label{lemma:unrounded-feasible}
  When the original processor counts and processing times are considered,
  $J'$ is a solution of the knapsack problem $\KP{J_B(d)}{m, d'}$,
  and, unless there is no schedule with makespan~$d$,
  $W(J', d') \leq md' - W_S(d)$.
\end{lemma}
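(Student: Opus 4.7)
The plan is to prove the two claims of \cref{lemma:unrounded-feasible} separately: feasibility of $J'$ for $\KP{J_B(d)}{m, d'}$ by transporting the rounded/compressed constraint satisfied by $J'$ to the original-count one at makespan $d'$ (using \cref{lemma:compress-twice} to absorb both the geometric rounding of processor counts and the gap between $d$ and $d'$), and the work bound by comparing the rounded profit of $J'$ to that of an optimal unrounded knapsack solution and controlling the accumulated rounding errors.

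For feasibility, I would start from the constraint
\begin{equation*}
  \sum_{j \in J' \cap J^C} (1-\rho) \check{\gamma}_j(d) + \sum_{j \in J' \setminus J^C} \check{\gamma}_j(d) \leq m
\end{equation*}
output by \cref{alg:knapsack-compressible} and bound $\procnum{j}{d'}$ in terms of $\check{\gamma}_j(d)$ for each $j \in J'$. For $j \in J' \cap J^C$ the key observation is $\procnum{j}{d} \geq 1/\rho \geq b$ (the latter because $2\rho - \rho^2 \geq \rho$ for $\rho \in (0,1]$), so \cref{lemma:compress-twice} applies and yields $\procnum{j}{(1+\delta)d} \leq (1-\rho)^2 \procnum{j}{d}$. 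Combining with the geometric-rounding bound $\procnum{j}{d} \leq (1+\rho)\check{\gamma}_j(d)$ and using that $d' = (1+\delta)^2 d \geq (1+\delta)d$ yields $\procnum{j}{d'} \leq (1-\rho)^2(1+\rho)\check{\gamma}_j(d) \leq (1-\rho)\check{\gamma}_j(d)$. For $j \in J' \setminus J^C$ a case split on whether $\procnum{j}{d} \leq b$ (then $\check{\gamma}_j(d) = \procnum{j}{d} \geq \procnum{j}{d'}$ by monotony) or $\procnum{j}{d} > b$ (then rounding plus \cref{lemma:compress-twice} again gives $\procnum{j}{d'} \leq (1-\rho)\check{\gamma}_j(d)$) establishes $\procnum{j}{d'} \leq \check{\gamma}_j(d)$ in both cases. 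Summing matches the constraint, so $\sum_{j \in J'} \procnum{j}{d'} \leq m$.

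For the work bound, assume a feasible schedule with makespan $d$ exists, so by \cref{lemma:two-shelf-schedule} there is $J^* \subseteq J_B(d)$ solving $\KP{J_B(d)}{m, d}$ with $W(J^*, d) \leq md - W_S(d)$. Since $\check{\gamma}_j(d) \leq \procnum{j}{d}$ and the knapsack additionally compresses $J^C$ by $\rho$, the set $J^*$ is feasible in the rounded/compressed instance, and \cref{thm:knapsack-compressible} therefore guarantees $\sum_{j \in J'}\pr{j} \geq \sum_{j \in J^*}\pr{j}$. Writing $v(S) = \sum_{j \in S} v_j(d)$ and $p(S) = \sum_{j \in S} \pr{j}$, we have $W(J', d) - W(J^*, d) = v(J^*) - v(J')$, which decomposes as $[v(J^*) - p(J^*)] + [p(J^*) - p(J')] + [p(J') - v(J')]$; the middle bracket is non-positive, and the other two are the total absolute rounding errors of the profits. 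Bounding these by the slack $(d' - d) m = ((1+\delta)^2 - 1) m d$ is the main obstacle: for jobs wide in $S_2$ ($\check{\gamma}_j(d/2) \geq b$), the sandwich $\check{t}_j(s) \check{\gamma}_j(s) \leq \work{j}{\procnum{j}{s}} \leq (1+4\rho)(1+\rho)\check{t}_j(s)\check{\gamma}_j(s)$ at $s \in \set{d, d/2}$ gives per-job error $\Landau{<=}{\rho \work{j}{\procnum{j}{d/2}}}$, summing to $\Landau{<=}{\rho m d}$ via the overall work budget; for jobs narrow in $S_2$ the direct geometric rounding of $\pr{j}$ with granularity $\frac{\delta}{2}d$ yields per-job error at most $\frac{\delta}{2}d$, and $\card{J^*}, \card{J'} = \Landau{<=}{m}$ (each job contributes at least one unit to its respective capacity) bounds the sum by $\Landau{<=}{\delta m d}$. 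Since $\rho = \Landau{=}{\delta}$, the total error is $\Landau{<=}{\delta m d}$, absorbed by the $\Omega(\delta) m d$ slack, giving $W(J', d) \leq m d' - W_S(d)$; since $W(J', d') \leq W(J', d)$ by monotony of work, the lemma follows.
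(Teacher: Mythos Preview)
Your feasibility argument is correct and essentially coincides with the paper's (you supply more detail on the case split for incompressible jobs, which the paper glosses over). The overall structure of your work-bound argument is also the same: compare $J'$ to the solution $J^*$ from \cref{lemma:two-shelf-schedule} via rounded profits, and control the accumulated rounding error against the slack $md'-md$.

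The gap is in your error bound for jobs wide in $S_2$. You claim the per-job error $|v_j(d)-\pr{j}|$ is $\Landau{<=}{\rho\,\work{j}{\procnum{j}{d/2}}}$ and that these sum to $\Landau{<=}{\rho md}$ ``via the overall work budget''. But the sum you need runs over $j\in J^*$ (and over $j\in J'$), i.e.\ over jobs placed in shelf~$S_1$, and there is no work budget that bounds $\sum_{j\in J^*}\work{j}{\procnum{j}{d/2}}$: the quantity $W(J^*,d)\le md$ controls $\sum_{j\in J^*}\work{j}{\procnum{j}{d}}+\sum_{j\notin J^*}\work{j}{\procnum{j}{d/2}}$, which says nothing about the $S_2$-work of the $S_1$-jobs. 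A single wide job in $J^*$ can have $\work{j}{\procnum{j}{d/2}}$ of order $md$, so the bound you assert does not follow.

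The paper avoids this by not trying to bound $|v_j(d)-\pr{j}|$ for wide-in-$S_2$ jobs at all. The key observation is that for those jobs $\pr{j}$ is \emph{exactly} the rounded-work difference $\check t_j(\tfrac d2)\check\gamma_j(\tfrac d2)-\check t_j(d)\check\gamma_j(d)$, so if you compare $J'$ and $J^*$ in terms of the \emph{rounded} two-shelf work, the wide-in-$S_2$ jobs contribute zero error: $p(J')\ge p(J^*)$ translates directly to $\check W(J')\le \check W(J^*)$ up to the narrow-job errors, and $\check W(J^*)\le W(J^*,d)\le md-W_S(d)$ since everything was rounded down. Only then do you convert $\check W(J')$ back to $W(J',d)$ by a single multiplicative factor $(1+4\rho)(1+\rho)$, applied to a quantity that \emph{is} bounded by roughly $md$. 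This reordering is what makes the error controllable; your decomposition $[v(J^*)-p(J^*)]+[p(J')-v(J')]$ forces you to bound each bracket separately, and the first one is not small in general.
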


\begin{proof}
  We have to carefully consider the implications of our rounding of the jobs on~$J'$.
  We underestimated the processor count (size) of wide jobs by a factor of at most~$1 - \rho$.
  The total size of all selected containers does not exceed $m$
  when the size of containers for wide jobs is multiplied by a factor of $1 - \rho$.
  According to Lemma~\ref{lemma:compress-twice},
  one compression with factor $2\rho - \rho^2$ reduces the processor count
  by a factor of~$(1 - \rho)^2$,
  and increases the processing time by a factor of less than $1 + \delta$.
  The jobs then fit into $m$~machines.
  Even before the rounding, the processing time was at most~$d$,
  so after the compression they have processing time
  less than $(1 + \delta) d < d'$,
  so $J'$ is a solution to $\KP{J_B(d)}{m, d'}$.
  
  We further have to bound the work~$W(J', d')$,
  assuming there is a schedule with makespan $d$.
  By Lemma~\ref{lemma:two-shelf-schedule}
  there is a two-shelf-schedule with work at most~$md - W_S(d)$.
  This must also be true when the rounded processor counts and processing times are considered.

  But our solution to the knapsack problem may not be optimal,
  since we modified the profits of the items corresponding jobs that are narrow in both shelves.
  Because a solution can contain at most $m$ jobs,
  omitting jobs with profit below~$\frac{\delta}{2}d$ reduces the profit by at most~$\frac{\delta}{2} m d$.
  Rounding the profits of the other items up lead us to overestimate the profit of the solution
  by at most $m b \frac{d}{2} \frac{\delta}{b} = \frac{\delta}{2} m d$,
  since each of the at most~$m$ items has original profit at most~$b \frac{d}{2}$,
  which we increased by a factor of at most~$1 + \frac{\delta}{b}$.
  Since the profit equates to the saved work,
  the work of our schedule with rounding may be up to~$md - W_S(d) + \delta m d$.

  Undoing the rounding can also increase the work of our schedule,
  namely by a factor of $1 + 4\rho$ for the rounding of the processing times,
  and a factor of $1 + \rho$ for the rounding of the processor counts.
  So with the original, unrounded numbers we must have
  \begin{equation}
    \begin{aligned}
      W(J', d') &\leq (md - W_S(d) + \delta m d)(1+4\rho)(1 + \rho) \\
      &\leq (m (1+\delta) d - W_S(d)) (1+4\rho)^2 \\
      &\leq m d (1+\delta) (1+4\rho)^2 - W_S(d) \\
      &\leq md' - W_S(d) \dpunct{.}
    \end{aligned}
  \end{equation}
\end{proof}

According to Corollary~\ref{thm:approx},
\cref{alg:scheduling-improved} yields a schedule with makespan $\frac{3}{2}d' \leq (\frac{3}{2} + \epsilon) d$.

The running time of \cref{line:bounded-knapsack} includes the transformations of the instances
and the solution, which can be done in $\Landau{<=}{n + k_I + k_C}$.
For the solution of the knapsack instance itself,
Theorem~\ref{thm:knapsack-compressible} states that
time~$\Landau{<=}{n_I \beta_{\max} + n_C \bar{n} \frac{1}{\rho'} \log \frac{C}{\alpha_{\min}}}$ is required,
where $n_I = \Landau{<=}{k_I \log m}$ and $n_C = \Landau{<=}{k_C \log m}$.
Using the same bounds~$\alpha_{\min}$, $\beta_{\max}$, and~$\bar{n}$ as before,
the running time is~$\Landau{<=}{n \frac{1}{\epsilon^2} \log m \parens[2]{\frac{\log m}{\epsilon} + \log^3 (\epsilon m)}}$.
Again, this is the dominating step.

However, while the solution of the knapsack problem is now linear in the number~$n$ of jobs,
applying the transformation rules (see \cref{sec:transformation-rules}).
to create a feasible schedule still requires $\Landau{<=}{n \log n}$~operations.
The total running time thus is
$\Landau{<=}{n \frac{1}{\epsilon^2} \log m \parens[2]{\frac{\log m}{\epsilon} + \log^3 (\epsilon m)} + n \log n}$.

\subsubsection{Obtaining a Linear Algorithm}
\label{sec:linear}

The super-linear running time when applying the transformations
stems from organizing the jobs in $S_1$ in a heap.
Instead of using the exact processing time~$\ptm{j}{\procnum{j}{d}}$,
we can use the rounded processing time~$\check{t}_j(d)$ we introduced in the last section.
Then the jobs can be organized in~$\Landau{<=}{\frac{1}{\delta}}$ lists.
The running time for the transformation rules then is~$\Landau{<=}{\frac{n}{\delta}}$.
We could even organize the lists of jobs in a heap,
resulting in a running time of~$\Landau{<=}{n \log \frac{1}{\delta}}$.

Since we underestimate the processing time of the jobs by at most~$\delta d$,
the makespan of the final schedule is at most~$(\frac{3}{2}(1+\delta)^2 + \delta) d$.
Using $\delta = \frac{1}{5}\epsilon$ will result in a $(\frac{3}{2} + \epsilon)$-dual approximation algorithm.
Since the other steps of the dual algorithm remain the same,
the total running time is
$\Landau{<=}{n \frac{1}{\epsilon^2} \log m \parens[2]{\frac{\log m}{\epsilon} + \log^3 (\epsilon m)}}$.

\section*{Conclusion}

We have presented several techniques to exploit monotony of jobs,
mostly based on the ability to compress wide jobs.
For two different algorithms we demonstrated that these techniques can help to reduce the running time
from polynomial in~$m$ to polynomial in~$\log m$.
We also showed that arbitrarily good approximation guarantees can be achieved in polynomial time.
On the negative side, we proved the \cclass{NP}-hardness of scheduling monotonic jobs.
It remains open whether a better approximation guarantee than~$\frac{3}{2}$
can be achieved efficiently, e.g.~in the form of an EPTAS.

\bibliographystyle{plain}
\bibliography{bibliography}

\begin{thebibliography}{10}

\bibitem{belkhale90}
Krishna~P. Belkhale and Prithviraj Banerjee.
\newblock An approximate algorithm for the partitionable independent task
  scheduling problem.
\newblock In {\em Proceedings of the 1990 International Conference on Parallel
  Processing (ICPP 1990)}, pages 72--75. Pennsylvania State University Press,
  1990.

\bibitem{chekuri04}
Chandra Chekuri, Ashish Goel, Sanjeev Khanna, and Amit Kumar.
\newblock Multi-processor scheduling to minimize flow time with $\epsilon$
  resource augmentation.
\newblock In {\em Proceedings of the Thirty-sixth Annual ACM Symposium on
  Theory of Computing}, STOC '04, pages 363--372, New York, NY, USA, 2004. ACM.

\bibitem{drozdowski95}
Maciej Drozdowski.
\newblock On the complexity of multiprocessor task scheduling.
\newblock In {\em Bulletin of the Polish Academy of Sciences, Technical
  Sciences}, volume~43, pages 381--392. 1995.

\bibitem{du89}
Jianzhong Du and Joseph Y.-T. Leung.
\newblock Complexity of scheduling parallel task systems.
\newblock {\em SIAM Journal on Discrete Mathematics}, 2:473--487, 1989.

\bibitem{garey75}
M.~R. Garey and R.~L. Graham.
\newblock Bounds for multiprocessor scheduling with resource constraints.
\newblock {\em SIAM Journal on Computing}, 4:187--200, 1975.

\bibitem{garey79}
Michael~R. Garey and David~S. Johnson.
\newblock {\em Computers and Intractability: A Guide to the Theory of
  \cclass{NP}-Completeness}.
\newblock Series of Books in the Mathematical Sciences. W. H. Freeman \& Co.,
  1979.

\bibitem{grigoriev06}
Alexander Grigoriev and Marc Uetz.
\newblock Scheduling parallel jobs with linear speedup.
\newblock In {\em Workshop on Approximation and Online Algorithms (WAOA 2006)},
  pages 203--215. Springer, 2006.

\bibitem{hochbaum87}
Dorit~S. Hochbaum and David~B. Shmoys.
\newblock Using dual approximation algorithms for scheduling problems:
  Theoretical and practical results.
\newblock {\em Journal of the ACM}, 34(1):144--162, 1987.

\bibitem{jansen13a}
Klaus Jansen.
\newblock A $(3/2+\epsilon)$ approximation algorithm for scheduling moldable
  and non-moldable parallel tasks.
\newblock In {\em Proceedings of the 24th ACM Symposium on Parallelism in
  Algorithms and Architectures (SPAA 2012)}, pages 224--235. ACM Press, 2012.

\bibitem{jansen13c}
Klaus Jansen, Felix Land, and Kati Land.
\newblock Bounding the running time of algorithms for scheduling and packing
  problems.
\newblock Technical Report 1302, Institute of Computer Science, University of
  Kiel, 2013.

\bibitem{jansen13b}
Klaus Jansen, Felix Land, and Kati Land.
\newblock Bounding the running time of algorithms for scheduling and packing
  problems.
\newblock {\em SIAM Journal on Discrete Mathematics}, 30(1):343--366, 2016.

\bibitem{jansen03}
Klaus Jansen and Lorant Porkolab.
\newblock Computing optimal preemptive schedules for parallel tasks: linear
  programming approaches.
\newblock {\em Mathematical Programming}, 95(3):617--630, 2003.

\bibitem{jansen06a}
Klaus Jansen and Lorant Porkolab.
\newblock On preemptive resource constrained scheduling: Polynomial-time
  approximation schemes.
\newblock {\em SIAM Journal on Discrete Mathematics}, 20:545--563, 2006.

\bibitem{jansen10}
Klaus Jansen and Ralf Th{\"o}le.
\newblock Approximation algorithms for scheduling parallel jobs.
\newblock {\em SIAM Journal on Computing}, 39(8):3571--3615, 2010.

\bibitem{jansen12}
Klaus Jansen and Hu~Zhang.
\newblock Scheduling malleable tasks with precedence constraints.
\newblock {\em Journal of Computer and System Sciences}, 78:245--259, 2012.

\bibitem{kellerer04}
Hans Kellerer, Ulrich Pferschy, and David Pisinger.
\newblock {\em Knapsack Problems}.
\newblock Springer, 2004.

\bibitem{lawler79}
Eugene~L. Lawler.
\newblock Fast approximation algorithms for knapsack problems.
\newblock {\em Mathematics of Operations Research}, 4(4):339--356, 1979.

\bibitem{ludwig94}
Walter Ludwig and Prasoon Tiwari.
\newblock Scheduling malleable and nonmalleable parallel tasks.
\newblock In {\em Proceedings of the Fifth Annual ACM-SIAM Symposium on
  Discrete Algorithms}, pages 167--176. Society for Industrial and Applied
  Mathematics, 1994.

\bibitem{makarychev14}
Konstantin Makarychev and Debmalya Panigrahi.
\newblock Precedence-constrained scheduling of malleable jobs with preemption.
\newblock In {\em Automata, Languages, and Programming (ICALP 2014)}, volume
  8572 of {\em Lecture Notes in Computer Science}, pages 823--834. Springer,
  2014.

\bibitem{mounie99}
Gr{\'e}gory Mouni{\'e}, Christophe Rapine, and Denis Trystram.
\newblock Efficient approximation algorithms for scheduling malleable tasks.
\newblock In Gary Miller and Vijaya Ramachandran, editors, {\em Proceedings of
  the Eleventh annual ACM Symposium on Parallel Algorithms and Architectures
  (SPAA 1999)}, pages 23--32. ACM, 1999.

\bibitem{mounie04}
Gr{\'e}gory Mouni{\'e}, Christophe Rapine, and Denis Trystram.
\newblock A $\frac{3}{2}$-dual approximation algorithm for scheduling
  independent monotonic malleable tasks, 2004.

\bibitem{mounie07}
Gr{\'e}gory Mouni{\'e}, Christophe Rapine, and Denis Trystram.
\newblock A $\frac32$-approximation algorithm for scheduling independent
  monotonic malleable tasks.
\newblock {\em SIAM Journal on Computing}, 37:401--412, 2007.

\bibitem{sanders11}
Peter Sanders and Jochen Speck.
\newblock Efficient parallel scheduling of malleable tasks.
\newblock In {\em 2011 IEEE International Parallel \& Distributed Processing
  Symposium (IPDPS)}, pages 1156--1166. IEEE Computer Society, 2011.

\bibitem{blazewicz06}
Denis Trystram, Mikhail~Y. Kovalyov, Jacek Blazewicz, Maciej Machowiak, and Jan
  Weglarz.
\newblock Preemptable malleable task scheduling problem.
\newblock {\em IEEE Transactions on Computers}, 55:486--490, 2006.

\bibitem{turek92}
John Turek, Joel~L. Wolf, and Philip~S. Yu.
\newblock Approximate algorithms scheduling parallelizable tasks.
\newblock In {\em Proceedings of the fourth annual ACM symposium on Parallel
  algorithms and architectures (SPAA '92)}, pages 323--332. ACM Press, 1992.

\end{thebibliography}

\end{document}